\DeclareMathOperator{\spun}{Span}
\newtheorem{theorem}{Theorem}
\newtheorem{lemma}{Lemma}
\newtheorem{definition}{Definition}
\newtheorem{remark}{Remark}
\newtheorem{example}{Example}
\newtheorem{proposition}{Proposition}
\newtheorem{corollary}{Corollary}
\newtheorem{fact}{Fact}
\newtheorem{observation}{Observation}
\definecolor{blockBlue}{RGB}{215,235,255}
\definecolor{blockGreen}{RGB}{215,255,215}
\definecolor{blockYellow}{RGB}{255,253,208}
\definecolor{blockRed}{RGB}{255,225,225}
\renewcommand{\thesubfigure}{\alph{subfigure}}
\newcommand{\Fq}{\mathbb{F}_{q}}
\newcommand{\C}{\mathcal{C}}
\newcommand{\E}{\mathbb{E}}
\newcommand{\F}{\mathbb{F}}
\newcommand{\K}{\mathbb{K}}
\newcommand{\R}{\mathcal{R}}
\newcommand{\I}{\mathcal{I}}
\newcommand{\Agg}{\mathrm{Agg}}
\newcommand{\Loc}{\mathsf{Loc}}
\newcommand{\ctr}{\mathsf{ctr}}
\newcommand{\Exp}{\mathsf{Exp}}
\begin{document}

\title{Reed-Solomon Codes with Optimal Repair Bandwidth: A Basis-Transformation Approach}

\author{%
Jing Qiu, Weijun Fang, Shu-Tao Xia, and Fang-Wei Fu%
\thanks{This work was supported in part by the National Key Research and Development Program of China under Grants 2021YFA1001000 and 2022YFA1005000; by the National Natural Science Foundation of China under Grants 62571298, 62571301, and 62371259;
by the Fundamental Research Funds for the Central Universities of China (Nankai University); by the Nankai Zhide Foundation; and by the China Postdoctoral Science Foundation under Grant 2025M773088. (Corresponding author: Weijun Fang.)}
\thanks{Jing Qiu and Shu-Tao Xia are with the Tsinghua Shenzhen International Graduate School, Tsinghua University, Shenzhen, Guangdong 518055, China 
(e-mails: \{jing\_qiu, xiast\}@sz.tsinghua.edu.cn).}
\thanks{Weijun Fang is with the State Key Laboratory of Cryptography and Digital Economy Security, the Key Laboratory of Cryptologic Technology and Information Security, Ministry of Education, and the School of Cyber Science and Technology, Shandong University, Qingdao 266237, China (e-mail: fwj@sdu.edu.cn).}%
\thanks{Fang-Wei Fu is with the Chern Institute of Mathematics and LPMC, Nankai University, Tianjin 300071, China (e-mail: fwfu@nankai.edu.cn).}%
\thanks{A preliminary conference version~\cite{isit2026} of this work has been accepted for
presentation at the 2026 IEEE International Symposium on Information Theory (ISIT).}
}
\maketitle

\begin{abstract}

Maximum distance separable (MDS) codes are widely used in distributed storage, but naively repairing a single failure in an $(n,k)$ MDS code requires downloading the full contents of $k$ surviving nodes. Minimum storage regenerating (MSR) codes, introduced by Dimakis et al., minimize repair bandwidth while preserving the MDS property by contacting $d>k$ helper nodes and downloading only a fraction of each helper. For scalar MDS codes, Guruswami and Wootters established a linear repair framework, and Tamo, Ye, and Barg subsequently gave the first explicit Reed-Solomon (RS) codes achieving the MSR point. Their construction yields RS-MSR codes with subpacketization $\ell=s\prod_{i=1}^n p_i$, where $s=d+1-k$ and the distinct primes $p_i$ satisfy $p_i\equiv 1\pmod{s}$. In this paper, we show that this congruence condition is not intrinsic to the RS repair problem. We develop a basis-transformation approach to the
construction of repair-enabling subspaces. The approach consists of three deterministic operations---Euclidean Square Partition, Transposition, and Column Aggregation---which construct the required repair-enabling subspaces directly from the standard monomial basis of the repair field. Consequently, we obtain RS-MSR codes with subpacketization $\ell=s\prod_{i=1}^n p_i$ for arbitrary distinct primes $p_i>s$. For fixed $s$, this improves the subpacketization of the Tamo--Ye--Barg construction by a factor asymptotic to $\varphi(s)^{n+\mathrm{o}(n)}$, where $\varphi(\cdot)$ denotes Euler's totient function.

\end{abstract}
\begin{IEEEkeywords}
Minimum storage regenerating (MSR) codes, Reed-Solomon codes, basis transformation, optimal repair, subpacketization.
\end{IEEEkeywords}

\section{Introduction}

\IEEEPARstart{R}{eed--Solomon} (RS) codes are widely used in distributed storage systems due to their optimal storage efficiency. A major practical challenge is that recovering a single failed node requires downloading a large amount of data from the remaining nodes. Dimakis et al.~\cite{Dimakis10} established the fundamental tradeoff between storage overhead and repair bandwidth, and Minimum Storage Regenerating (MSR) codes achieve the corresponding cut-set bound while maintaining the MDS property.

Assume the data are encoded by a code $\mathcal{C}$ over a finite field $F$, where each code symbol is either a single element of $F$ or an $\ell$-dimensional vector over $F$.
In the former case, $\mathcal{C}$ is called a \emph{scalar} code; in the latter case, $\mathcal{C}$ is called an \emph{array} code with subpacketization $\ell$.
Most MSR constructions have focused on array codes, and many interesting results have been obtained; see, e.g.,~\cite{Rashmi11PM, SuhRam11IA, Shah12Miser,Tamo13Zigzag,Cadambe13AsymIA,Papailio13Hadamard,WTB16ExplicitMSR, YeBarg17HighRate,Goparaju17AllParams,Sasidharan17CoupledLayer,Rawat16HighRateITA,Raviv15SmallFields,Vajha18Clay,ChenBarg20RackAware,Wang23TwoThird,Li2024linear}.
In contrast, constructing scalar MDS codes, particularly Reed-Solomon codes, that achieve the cut-set bound remained an open problem for years.
A landmark result by Tamo, Ye, and Barg~\cite{TYB17} provided the first explicit construction of $(n,k)$ Reed-Solomon codes that are MSR codes, utilizing the linear repair scheme for scalar codes proposed by Guruswami and Wootters~\cite{Guruswami16}.
As shown in~\cite{TYB17} and reviewed in Section~\ref{subsec:tyb17}, for a Reed-Solomon code over a finite field $\K$, optimal repair of a failed node~$i$ from $d$ helper nodes requires the construction of an $F_i$-subspace (where $F_i$ is a subfield of $\K$) $S_i\subseteq\K$ satisfying
\begin{equation}\label{eq:key-subspace}
\dim_{F_i}(S_i)=p_i,\qquad \K=\sum_{u=0}^{s-1}\alpha_i^u S_i,
\end{equation}
where $s=d+1-k$. The construction in~\cite{TYB17}, which we refer to as the TYB construction, achieves~Eq.~\eqref{eq:key-subspace} under the arithmetic restriction $p_i\equiv1\pmod{s}$, and hence yields subpacketization $\ell=s\prod_{i=1}^n p_i$. \textbf{This raises a natural question: Is the congruence condition $p_i\equiv1\pmod{s}$ intrinsic to optimal repair, or merely an artifact of the basis choice in the TYB construction?}

In this paper, we adopt a \emph{basis-transformation} approach.  More precisely, we construct the required repair-enabling subspaces by applying an explicit change of $\F$-basis, realized through three deterministic operations: Euclidean Square Partition, Transposition, and Column Aggregation. This new viewpoint allows us to remove the congruence restriction $p_i\equiv1\pmod{s}$ imposed in~\cite{TYB17}. Consequently, for arbitrary distinct primes $p_i>s$, we obtain RS-MSR codes with subpacketization
\[
\ell=s\prod_{i=1}^n p_i .
\] For fixed~$s$, this improves the subpacketization of~\cite{TYB17} by a factor asymptotic to $\varphi(s)^{n+\mathrm{o}(n)}$ (see Table~\ref{tab:asymptotic-subpacketization}).

\begin{table}[t]
\centering
\caption{Asymptotic subpacketization for fixed $s$}
\label{tab:asymptotic-subpacketization}
\renewcommand{\arraystretch}{1.12}
\begin{tabular}{|c|c|}
\hline
Construction & Subpacketization \\
\hline
RS-MSR code in~\cite{TYB17} & $(\varphi(s)n)^{n+\mathrm{o}(n)}$ \\
\hline
RS-MSR code in this paper & $n^{n+\mathrm{o}(n)}$ \\
\hline
\end{tabular}
\end{table}

The remainder of this paper is organized as follows. Section~\ref{sec:pre} provides the necessary preliminaries on RS codes and linear repair schemes. Section~\ref{sec:main} presents our main results. Section~\ref{sec:conclusion} concludes this paper.

\section{Preliminaries}\label{sec:pre}

Throughout this paper, we use the following notation unless stated otherwise.
\begin{itemize}
    \item For integers \(a,b\), define
\[
[a,b]:=\{a,a+1,\dots,b\},
\]
with the convention that \([a,b]=\emptyset\) whenever \(a>b\).

    \item For a positive integer $n$, define $[n] := \{1,\dots,n\}$.
    \item For an array $A$ of size $a\times b$, we index its entries by
    $(i,j)\in \mathcal{I}(a,b)$, where
    \[
        \mathcal{I}(a,b) := [0,a-1]\times[0,b-1],
    \]
    and denote by $A(i,j)$ the entry in row $i$ and column $j$.
\end{itemize}

\subsection{RS Codes and Repair Bandwidth}

We first recall how a Reed-Solomon code over an extension field can be viewed as an array code over its base field.

Let $\K$ be a finite field. Let $\Omega=\{\omega_1,\omega_2,\ldots,\omega_n\}\subseteq \K$ be a set of $n$ pairwise distinct elements, and let $v=(v_1,\ldots,v_n)\in (\K^*)^n$, where $\K^*:=\K\setminus\{0\}$. 
The generalized Reed-Solomon code $\operatorname{GRS}_{\K}(n,k,\Omega,v)\subseteq \K^n$ with evaluation set $\Omega$ and column multipliers $v$ is defined as the set
\[
\left\{\left(v_1 f(\omega_1),\ldots,v_n f(\omega_n)\right)\in \K^n:\ f\in \K[x],\ \deg f\le k-1\right\}.
\]
If $v=(1,\ldots,1)$, then the GRS code is called a Reed-Solomon code and is denoted by $\operatorname{RS}_{\K}(n,k,\Omega)$.
It is well known (see \cite{MacWilliams77}) that
\[
\left(\operatorname{RS}_{\K}(n,k,\Omega)\right)^{\perp}
=\operatorname{GRS}_{\K}(n,n-k,\Omega,u),
\]
where $u_i=\prod_{j\neq i}(\omega_i-\omega_j)^{-1}$ for $i=1,\ldots,n$.

Let $\C$ be an $(n,k)$ MDS array code over a finite field $\F$ with subpacketization $\ell$, i.e., each node stores a vector $c_i\in \F^{\ell}$. If the $i$-th node $c_i$ is erased, the repair bandwidth $N(\C, i,\mathcal{H})$ of $c_i$ is defined as the total amount of information (measured in symbols over $\F$) downloaded from a helper set $\mathcal{H}\subseteq [n]\setminus\{i\}$ with $|\mathcal{H}|=d$ to reconstruct $c_i$. 
The cut-set bound of \cite{Dimakis10} states that the repair bandwidth $N(\C,i,\mathcal{H})$ satisfies
\begin{equation}\label{eq:cutset}
N(\C, i, \mathcal{H}) \ge \frac{d\ell}{d-k+1}.
\end{equation}
We call $\C$ an MSR code if it achieves the cut-set bound in Eq.~\eqref{eq:cutset} with equality for all $i \in [n]$ and every helper set \(\mathcal{H}\subseteq[n]\setminus\{i\}\) with \(|\mathcal{H}|=d\).

If $\K/\F$ is an extension of degree $\ell$, then every RS code over $\K$ can be regarded as an MDS array code over $\F$ with subpacketization $\ell$. An RS code over $\K$ is called an RS-MSR code if it is an MSR code under this interpretation.

\subsection{Reed-Solomon codes as MSR codes}\label{subsec:tyb17}

The next theorem converts the linear repair of a scalar MDS code into a dual-code criterion. It is the starting point for both the TYB construction and our new construction.

\begin{theorem}[\cite{Guruswami16}] Let $\C \subset \K^n$ be a scalar linear MDS
code of length $n$. Let $\F$ be a subfield of $\K$ such that $[\K :
\F ]=\ell$. For a given $i \in [n]$ and a helper set $\mathcal{H}\subseteq [n]\setminus\{i\}$, the following statements
are equivalent.

\begin{enumerate}
  \item[(1)] There is a linear repair scheme of the node $c_i$ over $\F$
such that the repair bandwidth $N(\C, i,  \mathcal{H}) \leq b$.
  \item[(2)] There is a subset of codewords $P_i\subseteq \C^{\perp}$ with size $|P_i|=\ell$ such that
$${\rm dim}_{\F}(\{x_i:x\in P_i\})=\ell,$$
and
$$b\geq \sum_{j\in \mathcal H}{\rm dim}_{\F}(\{x_j:x\in P_i\}).$$
\end{enumerate}

\end{theorem}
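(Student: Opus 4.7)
The plan is to set up the equivalence by using the trace representation of $F$-linear functionals on $E$: once I fix any nondegenerate $F$-linear map $\mathrm{Tr}: E \to F$ (for example, the field trace), every $F$-linear functional $E \to F$ is uniquely of the form $x \mapsto \mathrm{Tr}(\alpha x)$ for some $\alpha \in E$. This identification is what turns ``dual codewords in $\C^\perp$'' into ``$F$-linear queries on the helper nodes,'' and is the engine driving both directions.

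For the direction $(2)\Rightarrow(1)$: given $P_i=\{h_1,\dots,h_\ell\}\subseteq \C^\perp$ whose $i$-th coordinates span $E$ over $F$, I would use the parity-check identities $\sum_{j} h_{t,j} c_j = 0$ (working in $E$) and apply $\mathrm{Tr}$ to obtain $\mathrm{Tr}(h_{t,i} c_i) = -\sum_{j\neq i}\mathrm{Tr}(h_{t,j} c_j)$ for each $t\in[\ell]$. Because $\{h_{t,i}\}_{t=1}^{\ell}$ spans $E$ over $F$, the $\ell$ functionals $x\mapsto \mathrm{Tr}(h_{t,i}x)$ determine $x=c_i$ uniquely (dual-basis argument). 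On the helper side, the scalars needed from node $j$ are $\{\mathrm{Tr}(h_{t,j} c_j):t\in[\ell]\}$, an $F$-linear subspace of queries on $c_j$ whose dimension equals $\dim_F\{h_{t,j}:t\in[\ell]\}$, giving the bandwidth bound $b\ge\sum_{j\neq i}\dim_F\{h_{t,j}:t\in[\ell]\}$ exactly.

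For the direction $(1)\Rightarrow(2)$: a linear repair scheme consists, for each helper $j\in\mathcal{R}$, of $F$-linear functionals $\phi_{j,1},\dots,\phi_{j,b_j}:E\to F$ whose joint outputs determine $c_i$ for all $c\in\C$, with $\sum_j b_j\le b$. Via the trace identification, write $\phi_{j,s}(x)=\mathrm{Tr}(\alpha_{j,s}x)$. Pick any $F$-basis $\beta_1,\dots,\beta_\ell$ of $E$; the hypothesis that the queries determine $c_i$ means that for each $t$ there exist $F$-coefficients $\lambda_{t,j,s}$ with $\mathrm{Tr}(\beta_t c_i)=\sum_{j,s}\lambda_{t,j,s}\mathrm{Tr}(\alpha_{j,s}c_j)$ for all $c\in\C$. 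Collecting terms, set $h_{t,i}:=\beta_t$, $h_{t,j}:=\sum_s\lambda_{t,j,s}\alpha_{j,s}$ for $j\in\mathcal{R}$, and $h_{t,j}:=0$ otherwise; then $\mathrm{Tr}\bigl(\sum_j h_{t,j}c_j\bigr)=0$ for every $c\in\C$. Since this holds for \emph{every} choice of $\beta_t$-related shift, replacing $\beta_t$ by $\gamma\beta_t$ for all $\gamma\in E$ lets me upgrade ``trace is zero on $\C$'' to ``the $E$-inner product is zero on $\C$,'' so $h_t\in\C^\perp$. The $i$-th coordinates $\{\beta_t\}$ span $E$ over $F$ by construction, and $\dim_F\{h_{t,j}:t\in[\ell]\}\le b_j$ because each $h_{t,j}$ lies in $\mathrm{span}_F\{\alpha_{j,1},\dots,\alpha_{j,b_j}\}$, yielding $\sum_{j\neq i}\dim_F\{h_{t,j}:t\}\le b$.

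The main obstacle I expect is the last upgrade step in $(1)\Rightarrow(2)$: passing from ``a single trace identity annihilates $\C$'' to ``the underlying $E$-linear form lies in $\C^\perp$.'' The clean way to handle this is to apply the repair-scheme hypothesis simultaneously to all scalar multiples (i.e.\ replace $c_i$ by $\gamma c_i$ via the functional $x\mapsto \mathrm{Tr}(\gamma\beta_t x)$), which forces $\mathrm{Tr}\bigl(\gamma\sum_j h_{t,j}c_j\bigr)=0$ for every $\gamma\in E$ and hence $\sum_j h_{t,j}c_j=0$ by nondegeneracy of $\mathrm{Tr}$. Once this technical point is cleared, the bandwidth count on both sides matches term-by-term, establishing the two-way bound $b\ge\sum_{j\neq i}\dim_F\{h_{t,j}:t\in[\ell]\}$ and completing the equivalence.
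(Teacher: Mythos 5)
The paper cites this theorem directly from Guruswami--Wootters and does not reproduce a proof, so there is no in-paper proof to compare against; I will therefore evaluate your argument on its own merits. Your overall plan---the trace-dual identification, the $(2)\Rightarrow(1)$ direction via a dual basis, and the $(1)\Rightarrow(2)$ direction by expressing each $\mathrm{Tr}(\beta_t c_i)$ as an $F$-linear combination of the downloaded symbols---is the standard Guruswami--Wootters argument and is correct in outline.

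The one place where your justification does not hold up as written is the ``upgrade'' step at the end of $(1)\Rightarrow(2)$. You say you will ``replace $c_i$ by $\gamma c_i$ via the functional $x\mapsto\mathrm{Tr}(\gamma\beta_t x)$'' and from this conclude $\mathrm{Tr}\bigl(\gamma\sum_j h_{t,j}c_j\bigr)=0$ for all $\gamma$. But if you change the target functional from $\mathrm{Tr}(\beta_t\,\cdot)$ to $\mathrm{Tr}(\gamma\beta_t\,\cdot)$ and re-invoke the repair scheme, the interpolation coefficients $\lambda_{t,j,s}$ change with $\gamma$, so you would obtain $\mathrm{Tr}\bigl(\gamma\beta_t c_i-\sum_{j\neq i}h_{t,j}^{(\gamma)}c_j\bigr)=0$ with $\gamma$-dependent $h^{(\gamma)}$, which does \emph{not} factor as $\mathrm{Tr}\bigl(\gamma\sum_j h_{t,j}c_j\bigr)$. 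The correct move is the other one you gesture at: keep $h_t$ fixed and replace the \emph{codeword} $c$ by $\gamma c$, which lies in $\C$ because $\C$ is $E$-linear. Then the already-established identity $\mathrm{Tr}\bigl(\sum_j h_{t,j}c_j\bigr)=0$ applied to $\gamma c$ gives $\mathrm{Tr}\bigl(\gamma\sum_j h_{t,j}c_j\bigr)=0$ for every $\gamma\in E$, and nondegeneracy of $\mathrm{Tr}$ yields $\sum_j h_{t,j}c_j=0$, i.e.\ $h_t\in\C^\perp$. Equivalently, note that $c\mapsto\sum_j h_{t,j}c_j$ is $E$-linear on the $E$-vector space $\C$, so its image is an $E$-subspace of $E$, hence $\{0\}$ or all of $E$; the latter would force $\mathrm{Tr}\equiv 0$, a contradiction. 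Once this step is stated correctly, the bandwidth accounting in both directions goes through exactly as you describe.
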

We next recall the TYB construction, which formulates a sufficient subspace condition for optimal repair. Our basis-transformation approach, introduced in Section~\ref{sec:main}, will provide a new constructive procedure for satisfying this condition.


Let $\Fq$ be a finite field.
Denote $s:=d-k+1$ and let $p_1,\dots,p_n$ be $n$ distinct primes such that
\begin{equation}\label{eq:pms}
p_i\equiv 1 \;\text{mod}\, s \;\;\text{~for all~} i=1,2,\dots,n.
\end{equation}
For $i=1,\dots,n$, let $\alpha_i$ be an element of an extension field of $\Fq$ such that $[\Fq(\alpha_i):\Fq]=p_i$, and let
\begin{equation*}
\E:=\Fq(\alpha_1,\dots,\alpha_n).
\end{equation*}
Since the primes
$p_1,\ldots,p_n$ are distinct, $[\E:\Fq]=\prod_{i=1}^np_i$.
For $i=1,\dots,n$, define
\[
F_i=\Fq(\{\alpha_j:j\neq i\}).
\]
Then $\E= F_i(\alpha_i)$ and $[\E : F_i]=p_i$.
Let $\K$ be the degree-$s$ extension of the field $\E$, i.e., there exists an element $\beta\in \K$ of degree $s$ over $\E$ such that $\K=\E(\beta)$.
We also have $[\K: F_i]=s p_i$ for all $i$.

The following theorem gives the exact TYB construction under the congruence condition~Eq.~\eqref{eq:pms}.
\begin{theorem}[\cite{TYB17}]\label{thm1}
Let $k,n,d$ be any positive integers such that $k< d \leq n-1.$ Let $\Omega=\{\alpha_1,\dots,\alpha_n\}$, where each $\alpha_i$, for $i=1,\dots,n$, is an element of degree $p_i$
over $\Fq$ and $p_i$ is a prime satisfying Eq.~\eqref{eq:pms}.
 The code $\C:=\text{\rm RS}_{\K}(n,k,\Omega)$ achieves the cut-set bound for the repair of any single node from any $d$ helper nodes.
In other words, $\C$ is an $(n,k)$ MSR code with repair degree $d$ and subpacketization $s\prod_{i=1}^{n}p_i$.
 \end{theorem}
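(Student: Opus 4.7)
The plan is to apply the Guruswami--Wootters theorem stated above and, for each failed node index $i\in[n]$, exhibit a family $P_i\subseteq\C^{\perp}$ of $\ell=s\prod_{j=1}^n p_j$ dual codewords whose $i$-th coordinates form an $\mathbb{F}_p$-basis of $\mathbb{E}$ while, for every $j\ne i$, the $j$-th coordinates lie inside the single subfield $\mathbb{F}\subseteq\mathbb{E}$. Since $[\mathbb{F}:\mathbb{F}_p]=\prod_j p_j=\ell/s$, summing the per-helper $\mathbb{F}_p$-dimensions over any size-$d$ helper set then yields bandwidth $d\ell/s=d\ell/(d-k+1)$, matching the cut-set bound in \eqref{eq:cutset}.

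Because $\C^{\perp}=\operatorname{GRS}_{\mathbb{E}}(n,n-k,\Omega,u)$, each dual codeword has the form $(u_j f(\alpha_j))_{j=1}^n$ for some $f\in\mathbb{E}[x]$ with $\deg f\le n-k-1$. Fix $i$. Using the tower $\mathbb{F}_p\subseteq F_i\subseteq\mathbb{F}\subseteq\mathbb{E}$ one obtains a natural product $\mathbb{F}_p$-basis of $\mathbb{E}$ of the form $\{\beta^{t}\alpha_{i}^{r}b\}$, with $t\in[0,s-1]$, $r\in[0,p_i-1]$, and $b$ ranging over a fixed $\mathbb{F}_p$-basis of $F_i$, giving $s\cdot p_i\cdot[F_i:\mathbb{F}_p]=\ell$ triples in total. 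I would index $P_i=\{f_{t,r,b}\}$ by these triples and design each $f_{t,r,b}$ so that (i) $f_{t,r,b}(\alpha_i)$ realizes the basis element $\beta^{t}\alpha_{i}^{r}b$ up to a fixed nonzero scalar, forcing $\mathbb{F}_p$-linear independence at $\alpha_i$, and (ii) $f_{t,r,b}(\alpha_j)\in\mathbb{F}$ for every $j\ne i$, placing the $j$-th coordinates inside the target subspace $\mathbb{F}$.

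The construction exploits two features. First, for $j\ne i$ one has $\alpha_j\in F_i$, so any polynomial with coefficients in $F_i$ automatically evaluates into $F_i\subseteq\mathbb{F}$ at every helper node; the task is thus to arrange the $\beta^{t}$- and $\alpha_{i}^{r}$-factors to survive at $\alpha_i$ while collapsing into $\mathbb{F}$ at every $\alpha_j$. Second, the congruence $p_i\equiv 1\pmod s$ provides auxiliary structure inside $\mathbb{F}_p(\alpha_i)$ (an index-$s$ subgroup in a relevant cyclic group, or equivalently a primitive $s$-th root of unity) that can be used to cancel the unwanted $\beta^{t}$-components of $f_{t,r,b}(\alpha_j)$ via a sparse multiplicative combination of minimal polynomials of $\alpha_i$. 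Once the family $\{f_{t,r,b}\}$ is explicit, conditions (1) and (2) of the Guruswami--Wootters theorem reduce to a direct dimension count against the product basis above.

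The main obstacle is fitting all three constraints, $\deg f_{t,r,b}\le n-k-1$, $\mathbb{F}_p$-linear independence at $\alpha_i$, and containment $f_{t,r,b}(\alpha_j)\in\mathbb{F}$ for every $j\ne i$, into a single family of polynomials. Any two of these are easy in isolation, but achieving all three simultaneously requires the $(t,r,b)$-signature of $f_{t,r,b}$ to be carried by an expression whose degree grows at most linearly in the combinatorial parameters. The congruence $p_i\equiv 1\pmod s$ is precisely what enables this packing within the degree budget $n-k-1$; dropping it forces extra multiplicative factors whose degrees blow up by $\phi(s)^n$, which is the bottleneck the present paper will later circumvent via its basis-transformation technique.
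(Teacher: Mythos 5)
Your plan hinges on finding dual‐codeword polynomials $f_{t,r,b}$ of degree at most $n-k-1$ whose evaluation at $\alpha_i$ equals (a fixed scalar times) $\beta^{t}\alpha_i^{r}b$, while $f_{t,r,b}(\alpha_j)\in\mathbb{F}$ for every $j\neq i$. That pair of requirements is mutually inconsistent when $s>1$, so the scheme cannot be instantiated. Write $f=\sum_{u=0}^{s-1}\beta^{u}g_u$ with $g_u\in\mathbb{F}[x]$ and $\deg g_u\le n-k-1$. Since each $\alpha_j\in\mathbb{F}$ and $\{1,\beta,\dots,\beta^{s-1}\}$ is an $\mathbb{F}$-basis of $\mathbb{E}$, the condition $f(\alpha_j)\in\mathbb{F}$ is equivalent to $g_u(\alpha_j)=0$ for every $u\ge 1$. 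Imposing this for all $j\neq i$ forces each $g_u$ ($u\ge 1$) to vanish at $n-1$ distinct points while having degree at most $n-k-1\le n-2<n-1$, hence $g_u\equiv 0$. Then $f\in\mathbb{F}[x]$, so $f(\alpha_i)\in\mathbb{F}$ and the $i$-th coordinates span at most $\mathbb{F}$, which has $\mathbb{F}_p$-dimension $\ell/s<\ell$. Condition (1) of the Guruswami--Wootters theorem therefore fails. (The same contradiction arises if you only enforce $f(\alpha_j)\in\mathbb{F}$ on a $d$-element helper set, because you would then also need $f(\alpha_j)=0$ on the remaining $n-1-d$ non-helpers to keep the GW bandwidth sum at $d\ell/s$.)

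The paper resolves this tension by not asking the $j$-th coordinates to fall into the subfield $\mathbb{F}$. Instead, Lemma~\ref{lem1} produces an $F_i$-subspace $S_i\subseteq\mathbb{E}$ of $\mathbb{F}_p$-dimension exactly $\ell/s$ which is \emph{not} contained in $\mathbb{F}$ (e.g.\ $\beta\alpha_i\in S_i$) and which satisfies $S_i+S_i\alpha_i+\dots+S_i\alpha_i^{s-1}=\mathbb{E}$. The dual codewords used are $e_m\cdot\bigl(v_j\alpha_j^{t}h(\alpha_j)\bigr)_j$ with $e_m$ ranging over an $F_i$-basis of $S_i$ and $t\in[0,s-1]$, where $h(x)=\prod_{j\notin R_i\cup\{i\}}(x-\alpha_j)$. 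Their helper coordinates all lie in $S_i$ (dimension $\ell/s$ over $\mathbb{F}_p$), their non-helper coordinates vanish because of $h$, and their $i$-th coordinates $e_m v_i\alpha_i^{t}h(\alpha_i)$ form an $F_i$-basis of $\mathbb{E}$ precisely because of the summation property of $S_i$. The genuine mathematical content is the existence of $S_i$; the rest of the repair scheme is the standard GW machinery (with trace taken to $F_i$, not $\mathbb{F}_p$). Your sketch does not identify this subspace or the lemma that provides it. Also, the congruence $p_i\equiv1\pmod s$ is not used via any ``index-$s$ subgroup'' or ``primitive $s$-th root of unity''; it is used only so that $(p_i-1)/s$ is an integer, so that the exponents $\{u+qs : u\in[0,s-1],\,q\in[0,(p_i-1)/s-1]\}$ tile $[0,p_i-2]$ in the definition of $S_i^{(1)}$, and it is exactly this tiling constraint that the present paper removes with its Euclidean-square/reshape/interference construction.
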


The following trace-expansion identity is fundamental to trace-based repair schemes for Reed-Solomon codes.
\begin{lemma}[\cite{Lidl94}]\label{lem:expression}
Let $\K$ be an extension field of a finite field $\F$ with $[\K:\F]=\ell$.
Assume $\{\zeta_1,\dots,\zeta_{\ell}\}$ and $\{\epsilon_1,\dots,\epsilon_{\ell}\}$ are dual bases of $\K$ over $\F$, i.e., ${\rm tr}(\zeta_i\epsilon_j)=\delta_{ij}$, where ${\rm tr}(\cdot)$ denotes the trace function from $\K$ to $\F$ and $\delta_{ij}$ is the Kronecker delta.
Then for any $\varepsilon\in \K$,
      $$\varepsilon=\sum_{i=1}^{\ell}{\rm tr}(\zeta_i\varepsilon)\epsilon_{i}.$$

\end{lemma}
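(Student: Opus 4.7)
The plan is to prove the expansion formula by first invoking that $\{\mu_1,\dots,\mu_\ell\}$ is an $F$-basis of $E$, so every $\gamma\in E$ admits a unique expansion $\gamma=\sum_{j=1}^{\ell}c_j\mu_j$ with coefficients $c_j\in F$. The entire task then reduces to identifying these coefficients $c_j$ in terms of the trace pairing with the dual basis $\{\zeta_1,\dots,\zeta_\ell\}$.

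To extract $c_j$, I would multiply both sides by $\zeta_i$ and take the trace, exploiting two facts: the trace map $\mathrm{tr}\colon E\to F$ is $F$-linear, and the coefficients $c_j$ lie in $F$ so they pull out of the trace. Concretely,
\[
\mathrm{tr}(\zeta_i\gamma)=\mathrm{tr}\Bigl(\zeta_i\sum_{j=1}^{\ell}c_j\mu_j\Bigr)=\sum_{j=1}^{\ell}c_j\,\mathrm{tr}(\zeta_i\mu_j)=\sum_{j=1}^{\ell}c_j\,\delta_{ij}=c_i,
\]
where the second-to-last equality uses the defining dual-basis relation $\mathrm{tr}(\zeta_i\mu_j)=\delta_{ij}$. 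Substituting $c_i=\mathrm{tr}(\zeta_i\gamma)$ back into the original expansion yields the claimed identity $\gamma=\sum_{i=1}^{\ell}\mathrm{tr}(\zeta_i\gamma)\mu_i$.

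There is essentially no obstacle here; the statement is a textbook characterization of dual bases under the trace form. The only thing worth verifying carefully is that $\{\mu_1,\dots,\mu_\ell\}$ is indeed an $F$-basis (which is built into the hypothesis of being a dual basis, since the nondegeneracy of the trace form together with $\mathrm{tr}(\zeta_i\mu_j)=\delta_{ij}$ forces linear independence over $F$, and a set of $\ell=[E:F]$ linearly independent elements spans). Once this is in place, the computation above is immediate and completes the proof.
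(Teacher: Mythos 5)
Your proof is correct, and it is the standard textbook argument (the paper itself does not reprove this lemma but cites it directly to Lidl and Niederreiter). Expanding $\gamma=\sum_j c_j\mu_j$ in the basis $\{\mu_j\}$, multiplying by $\zeta_i$, applying the $F$-linearity of the trace together with $c_j\in F$, and using $\mathrm{tr}(\zeta_i\mu_j)=\delta_{ij}$ to isolate $c_i=\mathrm{tr}(\zeta_i\gamma)$ is exactly the intended computation, and your side remark correctly notes that the dual-basis relation already forces $\{\mu_j\}$ to be linearly independent (hence a basis, since there are $\ell=[E:F]$ of them). Nothing is missing.
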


Under the congruence condition~Eq.~\eqref{eq:pms}, Tamo, Ye, and Barg~\cite{TYB17} provide the following optimal repair-enabling subspace for their construction.
 \begin{lemma}[{\cite{TYB17}}]\label{lem1}
Let $\alpha_i,\beta,F_i$ be as above.
Given $i\in [n]$, define the following vector spaces over $F_i$:
\begin{gather*}
S_i^{(1)} =\spun_{F_i} \big(\beta^u \alpha_i^{u+rs}: u\in [0,s-1],\ r\in [0,\tfrac{p_i-1}{s}-1] \big) \\
S_i^{(2)}  =\spun_{F_i} \Big(\sum_{u=0}^{s - 1}\beta^u \alpha_i^{p_i-1} \Big) \\
S_i   =S_i^{(1)} + S_i^{(2)}.
\end{gather*}
where $(p_i-1)/s\in\mathbb{Z}$ by~Eq.~\eqref{eq:pms}.

Then
\begin{equation}\label{eq:S}
    \dim_{F_i} S_i = p_i \text{ and }
S_i + S_i\alpha_i+\dots + S_i\alpha_i^{s-1}=\K.
\end{equation}

\end{lemma}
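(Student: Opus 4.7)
The plan is to prove the two assertions of \eqref{eq:S} separately. The dimension claim $\dim_{F_i} S_i = p_i$ will follow from a short basis-counting argument, while the equality $\sum_{t=0}^{s-1} S_i\alpha_i^t = \mathbb{E}$ will require more work.

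For the dimension claim, I would argue that the $p_i-1$ generators of $S_i^{(1)}$ are distinct basis vectors $\beta^u \alpha_i^{b}$ of $\mathbb{E}$ over $F_i$ with $u \in [0,s-1]$ and $b = u+qs \in [0,p_i-2]$ (the pairs $(u,q)$ correspond bijectively to such $(u,b)$), and hence are $F_i$-linearly independent. The single Type-B generator $\sum_{t=0}^{s-1}\beta^t \alpha_i^{p_i-1}$ lies outside $\spun_{F_i} S_i^{(1)}$ because it has a nonzero $\alpha_i^{p_i-1}$-component in every $\beta^t$-slice, whereas every Type-A generator uses only $\alpha_i^b$ with $b \le p_i-2$. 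Thus $\dim_{F_i} S_i = (p_i-1) + 1 = p_i$.

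For the second assertion, since $\dim_{F_i}\mathbb{E}=sp_i$ and the sum $\sum_{t=0}^{s-1} S_i\alpha_i^t$ is spanned by exactly $s(p_i-1)+s=sp_i$ vectors, it suffices to prove those vectors are $F_i$-linearly independent. My plan is to group the Type-A generators $\beta^u\alpha_i^{u+qs+t}$ by $u$: for each fixed $u$, as $(q,t)$ varies the exponent $u+qs+t$ sweeps through $[u,u+p_i-2]$ (because $(q,t)\mapsto qs+t$ is a bijection onto $[0,p_i-2]$), so these generators span $\beta^u V_u$, where $V_u := \spun_{F_i}(\alpha_i^u,\ldots,\alpha_i^{u+p_i-2}) = \alpha_i^u V_0$ with $V_0=\spun_{F_i}(1,\alpha_i,\ldots,\alpha_i^{p_i-2})$. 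Since multiplication by the unit $\alpha_i^u$ is an $F_i$-isomorphism of $\mathbb{F}$, each $V_u$ has dimension $p_i-1$, so the combined Type-A span equals $\bigoplus_u \beta^u V_u$ and has dimension $s(p_i-1)$. It then suffices to show that the $s$ Type-B generators project to a basis of the $s$-dimensional quotient $Q := \bigoplus_u \beta^u(\mathbb{F}/V_u)$.

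The main obstacle, I expect, is carrying out this projection without reducing $\alpha_i^{p_i+j}$ in the basis $\{1,\alpha_i,\ldots,\alpha_i^{p_i-1}\}$, since that reduction depends on the (unspecified) minimal polynomial of $\alpha_i$ over $F_i$. My workaround will be to fix any $F_i$-linear functional $\psi_0:\mathbb{F}\to F_i$ with $\ker\psi_0 = V_0$ and $\psi_0(\alpha_i^{p_i-1})=1$, and to define $\psi_u(x):=\psi_0(\alpha_i^{-u}x)$; since $\ker\psi_u = \alpha_i^u V_0 = V_u$, each $\psi_u$ identifies $\mathbb{F}/V_u$ with $F_i$. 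Under these identifications the image in $Q$ of the $t$-th Type-B generator becomes $(\psi_0(\alpha_i^{p_i-1+t-u}))_{u\in[0,s-1]}$. Since $p_i>s$ (because $p_i$ is prime with $p_i\equiv 1\pmod s$), for $u>t$ the exponent $p_i-1+t-u$ lies in $[p_i-s,p_i-2]\subseteq[0,p_i-2]$, on which $\psi_0$ vanishes, while for $u=t$ the exponent equals $p_i-1$, giving value $1$. Hence the $s\times s$ matrix $N_{u,t}=\psi_0(\alpha_i^{p_i-1+t-u})$ is upper triangular with unit diagonal and therefore nonsingular, completing the proof.
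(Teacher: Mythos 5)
Your proof is correct. It is worth noting, however, that the paper does \emph{not} prove Lemma~\ref{lem1} at all: it cites it verbatim from~\cite{TYB17}. What the paper proves from scratch is the strictly more general Lemma~\ref{lem:main}, via the Euclidean Square Partition / Reshape / Interference transformation (Appendix~\ref{app:proof-lem-main}), of which Lemma~\ref{lem1} is the degenerate case $b_1=1$ (Corollary~\ref{cor:m1} and the remark following it).

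Your argument is therefore a genuinely different, and considerably more direct, route. Where the paper tracks each target entry $R(i,j)$ through a column-ordered induction over the Euclidean recursion and isolates it by subtracting off contributions from neighbouring squares, you organise the $sp_i$ generators of $\sum_{t=0}^{s-1}S_i\alpha_i^t$ once and for all: Type-A terms decompose cleanly into $\bigoplus_{u}\beta^{u}V_{u}$ with $V_u=\alpha_i^u V_0$ of dimension $p_i-1$ (this step crucially exploits that $(q,t)\mapsto qs+t$ is a bijection onto $[0,p_i-2]$, which is exactly where the hypothesis $p_i\equiv 1\pmod s$ enters), and Type-B terms are then shown to span the $s$-dimensional quotient by exhibiting an upper-triangular change-of-basis matrix with unit diagonal. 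The choice of the functional $\psi_0$ annihilating $V_0$ with $\psi_0(\alpha_i^{p_i-1})=1$, together with the twist $\psi_u(\cdot)=\psi_0(\alpha_i^{-u}\cdot)$, is the clever step: it lets you evaluate exactly the entries you need (those with $\alpha_i$-exponent in $[p_i-s,p_i-1]$) while never having to reduce $\alpha_i^{p_i+j}$ modulo the minimal polynomial. The price of this elegance is that it is tied to the $b_1=1$ structure: for general $p>s$ the interference terms $\mathcal R_j$ (Proposition~\ref{prop:Rj-closed}) have a recursive shape that your clean ``one bijection, one triangular matrix'' argument does not survive, which is precisely what forces the paper's heavier combinatorial machinery. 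Within its intended scope your proof is complete and correct.
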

We next explain how the subspace condition~Eq.~\eqref{eq:S}, together with the trace-expansion identity, enables the repair process.

Let $\C = {\rm RS}_{\K}(n,k,\Omega)$ and let $c=(c_1,c_2,\dots,c_n)$ be a codeword of $\C$. Assume that $c_i$ is erased. The TYB repair argument proceeds as follows.

\begin{enumerate}

  \item By Lemma~\ref{lem1}, there exists an $F_i$-subspace $S_i\subseteq\K$ such that
\[
\dim_{F_i} S_i= p_i
\qquad\text{and}\qquad
\K=S_i +S_i\alpha_i +\dots+ S_i\alpha_i^{s-1}.
\]
Let $e_1, \dots, e_{p_i}$ be an arbitrary basis of the subspace $S_i$ over the field $F_i$.

  \item Let $\mathcal H=\{i_1,\dots,i_d\}\subset [n]\setminus \{i\}$ be any helper set of size $d =s+ k-1$.

  \item Let
\[
\Lambda(x)=\prod_{j\in [n]\setminus(\mathcal H\cup \{i\})}(x-\alpha_j).
\]
Then for each $u\in[0,s-1]$, the vector
\[
(u_1\alpha_1^{u}\Lambda(\alpha_1),\dots , u_n\alpha_n^{u}\Lambda(\alpha_n))
\]
belongs to $\C^{\perp}$ because $\deg(x^u\Lambda(x))\le (s-1)+(n-d-1)=n-k-1$. Hence
\[
c_i u_i\alpha_i^u \Lambda(\alpha_i)+\sum_{j=1}^{d}c_{i_j}u_{i_j}\alpha_{i_j}^{u}\Lambda(\alpha_{i_j})=0.
\]

  \item Let ${\rm tr}_i := {\rm tr}_{\K/F_i}$. Multiplying the above identity by $e_m$ and applying ${\rm tr}_i$ yields, for each $u\in[0,s-1]$ and $m\in[1,p_i]$,
  \begin{align*}
{\rm tr}_i(c_i e_m u_i \alpha_i^{u} \Lambda(\alpha_i))&=-\sum_{j=1}^{d}{\rm tr}_i(c_{i_j} e_m u_{i_j} \alpha_{i_j}^{u} \Lambda(\alpha_{i_j}))\\
&=-\sum_{j=1}^{d}\alpha_{i_j}^{u} \Lambda(\alpha_{i_j}){\rm tr}_i(c_{i_j} e_m u_{i_j}),
\end{align*}
because $\alpha_{i_j},\Lambda(\alpha_{i_j})\in F_i$ whenever $i_j\neq i$.

  \item Since $\K=\sum_{u=0}^{s-1}\alpha_i^uS_i$ and $[\K:F_i]=sp_i$, the $sp_i$ elements
\[
\left\{e_m\alpha_i^u:\ 0\le u\le s-1,\ 1\le m\le p_i\right\}
\]
form an $F_i$-basis of $\K$; multiplying by the nonzero scalar $u_i \Lambda(\alpha_i)$ preserves this basis. Therefore Lemma~\ref{lem:expression} shows that the traces
\[
\left\{{\rm tr}_i(c_i e_m u_i\alpha_i^{u}\Lambda(\alpha_i))\right\}_{0\le u\le s-1,\ 1\le m\le p_i}
\]
determine $c_i$.
  \item From each helper node $i_j$, where $1\leq j\leq d$, the repair procedure downloads precisely
\[
\left\{\operatorname{tr}_i(c_{i_j} e_m u_{i_j})\right\}_{m=1}^{p_i},
\]
which are $p_i$ symbols over $F_i$. Since $[F_i:\Fq]=\prod_{j\ne i}p_j$ and $\ell=[\K:\Fq]=s\prod_{j=1}^{n}p_j$, the total repair bandwidth, measured in $\Fq$-symbols, is
\[
d\cdot p_i\cdot [F_i:\Fq]
=d\prod_{j=1}^{n}p_j
=\frac{d\ell}{s}
=\frac{d\ell}{d-k+1},
\]
which meets the cut-set bound~Eq.~\eqref{eq:cutset}.
\end{enumerate}

\begin{remark}\label{rem:key-obstacle}
The argument above shows that the entire repair problem reduces to constructing, for each failed node~$i$, an $F_i$-subspace $S_i\subseteq\K$ satisfying~Eq.~\eqref{eq:S}. In~\cite{TYB17}, Lemma~\ref{lem1} provides such subspaces only under the congruence condition $p_i\equiv 1\pmod{s}$. In Section~\ref{sec:main}, we remove this restriction by constructing the required subspaces via an explicit basis transformation technique.
\end{remark}

\section{Main results}\label{sec:main}

In this section, we develop a basis-transformation technique to construct
an optimal repair-enabling subspace satisfying the analogue of Eq.~\eqref{eq:S}, namely Eq.~\eqref{eq:smain} under the sole condition $p>s$.

We fix a tower of finite fields $\F\subset\E\subset\K$ with
$[\E:\F]=p$ and $[\K:\E]=s$, where $p>s$. Let $\alpha$ generate $\E$
over~$\F$ and $\beta$ generate $\K$ over~$\E$. In the RS-MSR setting
of Section~\ref{subsec:tyb17}, this tower corresponds to
$F_i\subset\E\subset\K$ with $\alpha=\alpha_i$ for each node~$i$. As
discussed in Section~\ref{subsec:tyb17} (see
Remark~\ref{rem:key-obstacle}), the key step in optimal repair of
Reed-Solomon codes is to construct an $\F$-subspace~$S$ of~$\K$ such
that~Eq.~\eqref{eq:smain} holds.
To this end, we introduce a technique composed of three
operations---Euclidean Square Partition, Transposition, and
Column Aggregation---that transform the standard monomial basis array of~$\K$
into a new basis array with the specified property.
The $\F$-span of the first row of the transformed basis array is then
exactly the desired subspace~$S$.
Since all three operations are governed by the Euclidean algorithm
for~$(p,s)$, we fix the following notation throughout.

\begin{itemize}
    \item Write the Euclidean algorithm for~$(p,s)$ as
    \begin{equation}\label{eq:euclid}
    \begin{aligned}
    p &= a_0 s + b_1,\\
    s &= a_1 b_1 + b_2,\\
    b_1 &= a_2 b_2 + b_3,\\
    &\ \vdots\\
    b_{m-2} &= a_{m-1} b_{m-1} + b_m,\\
    b_{m-1} &= a_m b_m,
    \end{aligned}
    \end{equation}
    where $b_m=\gcd(p,s)$. Set $\nu:=\lfloor m/2\rfloor$ and $\nu':=\lceil m/2\rceil$.

    \item Define $C_0:=0$, and
\[
      C_r:=\sum_{\mu=1}^{r} a_{2\mu}\, b_{2\mu},\quad 1\le r\le \nu.
\]

    \item Define $Y_0:=0$, and
\[
    Y_h:=\sum_{\gamma=1}^{h} a_{2\gamma-1}\, b_{2\gamma-1},\quad 1\le h\le \nu'.
\]

    \item For $j\in[0,b_1-1]$, define
    \begin{equation}\label{eq:j-iter}
    j^{(0)}:=j, \qquad
    j^{(2r)}:=j^{(2r-2)} \pmod{b_{2r}},
    \quad 1\leq r\leq \nu.
    \end{equation}
\end{itemize}
We provide our main results as follows.

\begin{theorem}\label{thm:main}
Let $\F\subset\E\subset\K$ be finite fields with $[\E:\F]=p>s=[\K:\E]$.
Fix $\alpha\in\E$ and $\beta\in\K$ such that $\E=\F(\alpha)$ and $\K=\E(\beta)$. Then, there exists an $\F$-subspace~$S$ of~$\K$ such
that
\begin{equation}\label{eq:smain}
  \dim_{\F} S = p \quad\text{and}\quad
  S + S\alpha + \dots + S\alpha^{s-1} = \K.
\end{equation}
Furthermore, the explicit structure of $S$ is given as follows: 
\[S   =S^{(1)} + S^{(2)},\]
where
\begin{gather*}
S^{(1)} =\spun_{\F} \bigl(\alpha^{rs}(\alpha\beta)^{u}:\, r\in [0,a_0-1],\, u\in [0,s-1] \bigr), \\
S^{(2)} =\spun_{\F} \bigl(\R_j:\, j\in [0,b_1-1] \bigr), 
\end{gather*}
and for each $j\in[0,b_1-1]$, the element $\R_j\in\K$ is defined as follows.
\begin{enumerate}
\item[(1)] If $C_{r-1}\le j < C_r$ for some $1\le r\le \nu$, then
\[
\R_j
=
\alpha^{a_0 s+j}\!\left(
\sum_{h=0}^{r-1}\sum_{w=0}^{a_{2h+1}-1}
\beta^{Y_h
  +w\,b_{2h+1}+j^{(2h)}}
+
\beta^{Y_r+j^{(2r)}}
\right).
\]
\item[(2)] If  $C_{\nu}\le j\le b_1-1$ and $m$ is odd, then
\[
\R_j
=
\alpha^{a_0s+j}
\sum_{h=0}^{\nu'-1}\sum_{w=0}^{a_{2h+1}-1}
\beta^{Y_h
  +w\,b_{2h+1}+j^{(2h)}}.
\]
\end{enumerate}
\end{theorem}

\begin{corollary}\label{cor:m1}
When $m=1$, from Theorem~\ref{thm:main} we have that for $j\in [0,b_1-1]$,
$\R_j
=\alpha^{a_0s}(\alpha\beta)^j\sum_{t=0}^{a_1-1}\beta^{t b_1}$.
\end{corollary}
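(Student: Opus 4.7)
The plan is to obtain this corollary as a direct specialization of Proposition~\ref{prop:Rj-closed} to the boundary case $m=1$. When $m=1$, the Euclidean chain in \eqref{eq:euclid} reduces to $p=a_0 s+b_1$ and $s=a_1 b_1$, so $b_1=\gcd(p,s)$ divides $s$. My first step is to read off what the two outer sums in \eqref{eq:Rj-closed} contribute under this assumption: since $\lceil m/2\rceil=1$, the first sum retains only the index $t=1$, and since $\lfloor m/2\rfloor=0$, the second sum is empty.

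Next, I would verify that the surviving $t=1$ term in the first sum is unconstrained. The condition ``for all $k\in\{1,\dots,t-1\}$'' is a quantification over the empty set when $t=1$, hence vacuously satisfied. Thus the first sum contributes precisely
\[
\sum_{q=0}^{a_1-1}\beta^{\left(\sum_{k=1}^{0}a_{2k-1}b_{2k-1}\right)+q b_1+j^{(0)}}
=\sum_{q=0}^{a_1-1}\beta^{q b_1+j},
\]
using the conventions $\sum_{k=1}^{0}(\cdot)=0$ and $j^{(0)}=j$ from the definitions preceding Proposition~\ref{prop:Rj-closed}.

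Finally, I would multiply by the prefactor $\alpha^{a_0 s+j}$ appearing in \eqref{eq:Rj-closed} and factor out $\alpha^j\beta^j=(\alpha\beta)^j$ from the summation, yielding
\[
\mathcal R_j=\alpha^{a_0 s+j}\sum_{q=0}^{a_1-1}\beta^{qb_1+j}
=\alpha^{a_0 s}(\alpha\beta)^j\sum_{t=0}^{a_1-1}\beta^{tb_1},
\]
which is the formula asserted in Corollary~\ref{cor:m1}.

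There is no real obstacle here beyond careful bookkeeping; the only pitfalls are (i) remembering that ``$m=1$'' means the Euclidean chain terminates at the first line after $p=a_0s+b_1$, so that $s$ itself plays the role of $b_0$ and satisfies $s=a_1 b_1$, and (ii) correctly interpreting the vacuous conditions and empty sums that make the second outer sum disappear and leave the first one unrestricted. Once these conventions are pinned down, the derivation is a one-line substitution.
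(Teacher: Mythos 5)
Your proposal is correct and follows essentially the same route as the paper: substitute $m=1$, observe that $\lfloor m/2\rfloor=0$ kills the second outer sum and $\lceil m/2\rceil=1$ leaves only the (vacuously unconstrained) $t=1$ term in the first, evaluate the empty inner sum $\sum_{k=1}^{0}a_{2k-1}b_{2k-1}=0$, and factor out $(\alpha\beta)^j$. The extra bookkeeping you add about $s=a_1b_1$ and $b_1=\gcd(p,s)$ is harmless context but not needed for the computation.
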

\begin{proof}
When $m=1$, we have $C_0=0\le j\le b_1-1$, so every column falls into case~(2) of Theorem~\ref{thm:main} with a single index $h=0$ in the displayed sum. This gives the exponents $w\,b_1+j$ for $w=0,\dots,a_1-1$. Factoring out $\alpha^{a_0s+j}\beta^j=\alpha^{a_0s}(\alpha\beta)^j$ yields the formula.
\end{proof}

\begin{remark}
The construction in \cite{TYB17} (see Lemma~\ref{lem1}) is a special case of Corollary \ref{cor:m1} with $b_1=1$.
\end{remark}

\begin{corollary}\label{cor:m2}
When $m=2$, Theorem~\ref{thm:main} gives
\[
\R_j
=\alpha^{a_0s}(\alpha\beta)^j\Bigg(\sum_{w=0}^{a_1-1}\beta^{w b_1}
+\beta^{a_1b_1-\left\lfloor j/b_2\right\rfloor b_2}\Bigg),
\qquad j\in [0,b_1-1].
\]
\end{corollary}
\begin{proof}
When $m=2$, we have $C_1=a_2b_2=b_1$. Since $C_0=0\le j<C_1=b_1$, every column falls into case~(1) of Theorem~\ref{thm:main} with $r=1$. The double sum contributes the exponents $w\,b_1+j$ for $w=0,\dots,a_1-1$, and the final term contributes $a_1b_1+j^{(2)}$ where $j^{(2)}=j\pmod {b_2}$. Factoring out $\alpha^{a_0s}(\alpha\beta)^j$ and using $j-j^{(2)}=\lfloor j/b_2\rfloor b_2$ yields the formula.
\end{proof}
\begin{remark}
In Fig.~\ref{fig:boxed-tikz}, we provide an example to illustrate Corollary~\ref{cor:m2}.
\end{remark}

\subsection{A basis-transformation perspective}\label{subsec:basistrans}

In this subsection, we formulate the basis-transformation problem. Given the standard monomial basis of~$\K$ arranged as a $p\times s$ array, the goal is to transform it into an $s\times p$ array whose rows are successive $\alpha$-multiples of the first row. The $\F$-span of the first row then serves as the optimal repair-enabling subspace~$S$.

We first introduce the monomial basis of $\K$ over $\F$. Note that $\K=\E(\beta)$ and $\E=\F(\alpha)$, thus $\{\beta^u:0\le u\le s-1\}$ is an $\E$-basis of~$\K$ and $\{\alpha^i:0\le i\le p-1\}$ is an $\F$-basis of~$\E$. Since $\alpha\in\E^*$, we have $\E(\alpha\beta)=\E(\beta)=\K$, so $\{(\alpha\beta)^u:0\le u\le s-1\}$ is also an $\E$-basis of~$\K$. Therefore, the set
\begin{equation*}
 \bigl\{ \alpha^{i}(\alpha\beta)^{j} : i=0,\dots,p-1;~ j=0,\dots,s-1 \bigr\}
\end{equation*}
is an $\F$-basis of $\K$. We arrange it as a $p\times s$ array~$B$ (see Fig.~\ref{fig:basis0}), specifically, for each $(i,j)\in\I(p,s)$,
\begin{equation}\label{eq:B-def}
B(i,j)=\alpha^{i}(\alpha\beta)^{j}.
\end{equation}
Accordingly, the problem of finding $S$ satisfying~Eq.~\eqref{eq:smain} reduces to transforming the monomial basis array $B$ into a new $\F$-basis that can be arranged as an $s \times p$ array, whose rows are successive $\alpha$-multiples of the first row. The desired subspace $S$ is then given by the $\F$-span of the first row (see Fig.~\ref{fig:basis1}). The target arrangement in Fig.~\ref{fig:basis1} is meaningful only when $p>s$, because the number of successive $\alpha$-layers cannot exceed the dimension $p=[\E:\F]$.
\begin{figure}[H]
    \centering
    \small
    \renewcommand{\arraystretch}{1.4}
    \setlength{\arraycolsep}{4pt}

    \begin{minipage}[t]{0.46\linewidth}
        \centering
        $
        \begin{array}{|r|r|c|r|}
            \multicolumn{1}{c}{\mathbb{E}} &
            \multicolumn{1}{c}{\beta\mathbb{E}} &
            \multicolumn{1}{c}{\cdots} &
            \multicolumn{1}{c}{\beta^{s-1}\mathbb{E}} \\
            \multicolumn{1}{c}{\downarrow} &
            \multicolumn{1}{c}{\downarrow} &
            \multicolumn{1}{c}{} &
            \multicolumn{1}{c}{\downarrow} \\
            \hline
            1 & \alpha\beta & \cdots & (\alpha\beta)^{s-1} \\ \hline
            \alpha & \alpha(\alpha\beta) & \cdots & \alpha(\alpha\beta)^{s-1} \\ \hline
            \alpha^2 & \alpha^2(\alpha\beta) & \cdots & \alpha^2(\alpha\beta)^{s-1} \\ \hline
            \alpha^3 & \alpha^3(\alpha\beta) & \cdots & \alpha^3(\alpha\beta)^{s-1} \\ \hline
            \vdots & \vdots & \ddots & \vdots \\ \hline
            \alpha^{p-1} & \alpha^{p-1}(\alpha\beta) & \cdots & \alpha^{p-1}(\alpha\beta)^{s-1} \\ \hline
        \end{array}
        $
        \caption{Original $p \times s$ array}
        \label{fig:basis0}
    \end{minipage}
    \hfill
    \begin{minipage}[t]{0.06\linewidth}
        \centering
        \vspace*{3.2em}
        {\Large $\xrightarrow{?}$}
    \end{minipage}
    \hfill
    \begin{minipage}[t]{0.46\linewidth}
        \centering
        $
        \begin{array}{|r|r|r|c|r| l}
            \cline{1-5}
            x_0 & x_1 & x_2 & \cdots & x_{p-1} & \leftarrow S \\ \cline{1-5}
            \alpha x_0 & \alpha x_1 & \alpha x_2 & \cdots & \alpha x_{p-1} & \leftarrow \alpha S \\ \cline{1-5}
            \vdots & \vdots & \vdots & \ddots & \vdots & \vdots \\ \cline{1-5}
            \alpha^{s-1} x_0 & \alpha^{s-1} x_1 & \alpha^{s-1} x_2 & \cdots & \alpha^{s-1} x_{p-1} & \leftarrow \alpha^{s-1}S \\ \cline{1-5}
        \end{array}
        $
        \caption{Target $s \times p$ array}
        \label{fig:basis1}
    \end{minipage}
\end{figure}

\subsection{An illustrative example of the basis transformation technique}

We present a basis-transformation technique that achieves the conversion described above. Before introducing the formal definitions, we illustrate the technique using the example $(p,s)=(7,5)$. Fig.~\ref{fig:boxed-tikz} illustrates the entire transformation procedure for this example. The corresponding Euclidean algorithm is $7=1\times 5+2$, $5=2\times 2+1$, and $2=2\times 1$. Then

\textbf{Step 1.} Partition $B$ into square blocks whose side lengths are determined by the Euclidean algorithm for $(7,5)$, yielding one $5 \times 5$ block, two $2 \times 2$ blocks, and two $1 \times 1$ blocks.

\textbf{Step~2.1.} Transpose the entire array $B$ to obtain a $5\times 7$ array~$B^{\top}$.

\textbf{Step~2.2.} Transpose each square block in $B^{\top}$ individually. The resulting $5\times 7$ array is denoted by~$R$.

\textbf{Step 3.} For each column of~$R$, we keep the $\alpha$-part of each entry unchanged and replace the $\beta$-part by the sum of all distinct powers of~$\beta$ appearing in that column. Specifically, the columns with indices $0$ through $4$ remain unchanged. In the column with index $5$, the $\beta$-part of each entry is replaced by $1+\beta^2+\beta^4$, and in the column with index $6$, the $\beta$-part of each entry is replaced by $\beta+\beta^3+\beta^4$. The resulting array is denoted by~$\overline R$.

Then the rows of $\overline R$ are successive $\alpha$-multiples of the first row. Let $S$ be the $\F$-subspace spanned by the first row. To verify~Eq.~\eqref{eq:smain} in this example, it remains to show that
\[
S+S\alpha+S\alpha^2+S\alpha^3+S\alpha^4=\K.
\]
Equivalently, it suffices to show that every entry of $R$ is an $\F$-linear combination of entries of $\overline R$.

\begin{figure*}[t]
\centering
\begin{minipage}{0.96\textwidth}

\centering

\def\vsepA{1.1cm}
\def\vsepB{1.1cm}

\begin{tikzpicture}[
    node distance=1.0cm,
    mat/.style={
        matrix of nodes,
        nodes in empty cells,
        nodes={
            anchor=center,
            text height=1.6ex,
            text depth=0.25ex,
            minimum height=1.4em,
            minimum width=2.8em,
            inner sep=1.5pt,
            font=\scriptsize,
            execute at begin node=$,
            execute at end node=$,
            draw=black!80,
            thin
        },
        column sep=-\pgflinewidth,
        row sep=-\pgflinewidth,
        draw=black,
        thick,
        inner sep=0pt
    },
    arrowLine/.style={
        ->,
        >={Stealth[length=3mm, width=2mm]},
        line width=1.5pt,
        draw=gray!50
    },
    titleNode/.style={
        font=\bfseries\small,
        text=black!80,
        align=center,
        inner sep=4pt
    },
    scale=0.82,
    transform shape
]

\colorlet{tinyBlockA}{blockRed}
\colorlet{tinyBlockB}{blockRed!55!blockYellow}

\matrix (m2) [mat] {
|[fill=blockBlue]| 1 & |[fill=blockBlue]| \alpha\beta & |[fill=blockBlue]| \alpha^{2}\beta^{2} & |[fill=blockBlue]| \alpha^{3}\beta^{3} & |[fill=blockBlue]| \alpha^{4}\beta^{4} & |[fill=blockGreen]| \alpha^{5} & |[fill=blockGreen]| \alpha^{6}\beta \\
|[fill=blockBlue]| \alpha & |[fill=blockBlue]| \alpha^{2}\beta & |[fill=blockBlue]| \alpha^{3}\beta^{2} & |[fill=blockBlue]| \alpha^{4}\beta^{3} & |[fill=blockBlue]| \alpha^{5}\beta^{4} & |[fill=blockGreen]| \alpha^{6} & |[fill=blockGreen]| \alpha^{7}\beta \\
|[fill=blockBlue]| \alpha^{2} & |[fill=blockBlue]| \alpha^{3}\beta & |[fill=blockBlue]| \alpha^{4}\beta^{2} & |[fill=blockBlue]| \alpha^{5}\beta^{3} & |[fill=blockBlue]| \alpha^{6}\beta^{4} & |[fill=blockYellow]| \alpha^{7}\beta^{2} & |[fill=blockYellow]| \alpha^{8}\beta^{3} \\
|[fill=blockBlue]| \alpha^{3} & |[fill=blockBlue]| \alpha^{4}\beta & |[fill=blockBlue]| \alpha^{5}\beta^{2} & |[fill=blockBlue]| \alpha^{6}\beta^{3} & |[fill=blockBlue]| \alpha^{7}\beta^{4} & |[fill=blockYellow]| \alpha^{8}\beta^{2} & |[fill=blockYellow]| \alpha^{9}\beta^{3} \\
|[fill=blockBlue]| \alpha^{4} & |[fill=blockBlue]| \alpha^{5}\beta & |[fill=blockBlue]| \alpha^{6}\beta^{2} & |[fill=blockBlue]| \alpha^{7}\beta^{3} & |[fill=blockBlue]| \alpha^{8}\beta^{4} & |[fill=tinyBlockA]| \alpha^{9}\beta^{4} & |[fill=tinyBlockB]| \alpha^{10}\beta^{4} \\
};
\node[titleNode, above=1pt] at (m2.north) {Step 2.2: Block-wise Transposition. $R$};

\matrix (m1) [mat, above=\vsepA of m2] {
|[fill=blockBlue]| 1 & |[fill=blockBlue]| \alpha & |[fill=blockBlue]| \alpha^{2} & |[fill=blockBlue]| \alpha^{3} & |[fill=blockBlue]| \alpha^{4} & |[fill=blockGreen]| \alpha^{5} & |[fill=blockGreen]| \alpha^{6} \\
|[fill=blockBlue]| \alpha\beta & |[fill=blockBlue]| \alpha^{2}\beta & |[fill=blockBlue]| \alpha^{3}\beta & |[fill=blockBlue]| \alpha^{4}\beta & |[fill=blockBlue]| \alpha^{5}\beta & |[fill=blockGreen]| \alpha^{6}\beta & |[fill=blockGreen]| \alpha^{7}\beta \\
|[fill=blockBlue]| \alpha^{2}\beta^{2} & |[fill=blockBlue]| \alpha^{3}\beta^{2} & |[fill=blockBlue]| \alpha^{4}\beta^{2} & |[fill=blockBlue]| \alpha^{5}\beta^{2} & |[fill=blockBlue]| \alpha^{6}\beta^{2} & |[fill=blockYellow]| \alpha^{7}\beta^{2} & |[fill=blockYellow]| \alpha^{8}\beta^{2} \\
|[fill=blockBlue]| \alpha^{3}\beta^{3} & |[fill=blockBlue]| \alpha^{4}\beta^{3} & |[fill=blockBlue]| \alpha^{5}\beta^{3} & |[fill=blockBlue]| \alpha^{6}\beta^{3} & |[fill=blockBlue]| \alpha^{7}\beta^{3} & |[fill=blockYellow]| \alpha^{8}\beta^{3} & |[fill=blockYellow]| \alpha^{9}\beta^{3} \\
|[fill=blockBlue]| \alpha^{4}\beta^{4} & |[fill=blockBlue]| \alpha^{5}\beta^{4} & |[fill=blockBlue]| \alpha^{6}\beta^{4} & |[fill=blockBlue]| \alpha^{7}\beta^{4} & |[fill=blockBlue]| \alpha^{8}\beta^{4} & |[fill=tinyBlockA]| \alpha^{9}\beta^{4} & |[fill=tinyBlockB]| \alpha^{10}\beta^{4} \\
};
\node[titleNode, above=1pt] at (m1.north) {Step 2.1: Overall Transposition. $B^{\top}$};

\matrix (m3) [
    mat,
    below=\vsepB of m2,
    column 6/.style={nodes={minimum width=6.8em}},
    column 7/.style={nodes={minimum width=6.8em}}
] {
|[fill=blockBlue]| 1 & |[fill=blockBlue]| \alpha\beta & |[fill=blockBlue]| \alpha^{2}\beta^{2} & |[fill=blockBlue]| \alpha^{3}\beta^{3} & |[fill=blockBlue]| \alpha^{4}\beta^{4} & |[fill=blockGreen]| \alpha^{5}(1+\beta^{2}+\beta^{4}) & |[fill=blockGreen]| \alpha^{6}(\beta+\beta^{3}+\beta^{4}) \\
|[fill=blockBlue]| \alpha & |[fill=blockBlue]| \alpha^{2}\beta & |[fill=blockBlue]| \alpha^{3}\beta^{2} & |[fill=blockBlue]| \alpha^{4}\beta^{3} & |[fill=blockBlue]| \alpha^{5}\beta^{4} & |[fill=blockGreen]| \alpha^{6}(1+\beta^{2}+\beta^{4}) & |[fill=blockGreen]| \alpha^{7}(\beta+\beta^{3}+\beta^{4}) \\
|[fill=blockBlue]| \alpha^{2} & |[fill=blockBlue]| \alpha^{3}\beta & |[fill=blockBlue]| \alpha^{4}\beta^{2} & |[fill=blockBlue]| \alpha^{5}\beta^{3} & |[fill=blockBlue]| \alpha^{6}\beta^{4} & |[fill=blockYellow]| \alpha^{7}(1+\beta^{2}+\beta^{4}) & |[fill=blockYellow]| \alpha^{8}(\beta+\beta^{3}+\beta^{4}) \\
|[fill=blockBlue]| \alpha^{3} & |[fill=blockBlue]| \alpha^{4}\beta & |[fill=blockBlue]| \alpha^{5}\beta^{2} & |[fill=blockBlue]| \alpha^{6}\beta^{3} & |[fill=blockBlue]| \alpha^{7}\beta^{4} & |[fill=blockYellow]| \alpha^{8}(1+\beta^{2}+\beta^{4}) & |[fill=blockYellow]| \alpha^{9}(\beta+\beta^{3}+\beta^{4}) \\
|[fill=blockBlue]| \alpha^{4} & |[fill=blockBlue]| \alpha^{5}\beta & |[fill=blockBlue]| \alpha^{6}\beta^{2} & |[fill=blockBlue]| \alpha^{7}\beta^{3} & |[fill=blockBlue]| \alpha^{8}\beta^{4} & |[fill=tinyBlockA]| \alpha^{9}(1+\beta^{2}+\beta^{4}) & |[fill=tinyBlockB]| \alpha^{10}(\beta+\beta^{3}+\beta^{4}) \\
};
\node[titleNode, above=1pt] at (m3.north) {Step 3: Column Aggregation. $\overline{R}$};

\matrix (m0) [mat, left=3cm of m2] {
|[fill=blockBlue]| 1 & |[fill=blockBlue]| \alpha\beta & |[fill=blockBlue]| \alpha^{2}\beta^{2} & |[fill=blockBlue]| \alpha^{3}\beta^{3} & |[fill=blockBlue]| \alpha^{4}\beta^{4} \\
|[fill=blockBlue]| \alpha & |[fill=blockBlue]| \alpha^{2}\beta & |[fill=blockBlue]| \alpha^{3}\beta^{2} & |[fill=blockBlue]| \alpha^{4}\beta^{3} & |[fill=blockBlue]| \alpha^{5}\beta^{4} \\
|[fill=blockBlue]| \alpha^{2} & |[fill=blockBlue]| \alpha^{3}\beta & |[fill=blockBlue]| \alpha^{4}\beta^{2} & |[fill=blockBlue]| \alpha^{5}\beta^{3} & |[fill=blockBlue]| \alpha^{6}\beta^{4} \\
|[fill=blockBlue]| \alpha^{3} & |[fill=blockBlue]| \alpha^{4}\beta & |[fill=blockBlue]| \alpha^{5}\beta^{2} & |[fill=blockBlue]| \alpha^{6}\beta^{3} & |[fill=blockBlue]| \alpha^{7}\beta^{4} \\
|[fill=blockBlue]| \alpha^{4} & |[fill=blockBlue]| \alpha^{5}\beta & |[fill=blockBlue]| \alpha^{6}\beta^{2} & |[fill=blockBlue]| \alpha^{7}\beta^{3} & |[fill=blockBlue]| \alpha^{8}\beta^{4} \\
|[fill=blockGreen]| \alpha^{5} & |[fill=blockGreen]| \alpha^{6}\beta & |[fill=blockYellow]| \alpha^{7}\beta^{2} & |[fill=blockYellow]| \alpha^{8}\beta^{3} & |[fill=tinyBlockA]| \alpha^{9}\beta^{4} \\
|[fill=blockGreen]| \alpha^{6} & |[fill=blockGreen]| \alpha^{7}\beta & |[fill=blockYellow]| \alpha^{8}\beta^{2} & |[fill=blockYellow]| \alpha^{9}\beta^{3} & |[fill=tinyBlockB]| \alpha^{10}\beta^{4} \\
};
\node[titleNode, above=1pt] at (m0.north) {Step 1: Euclidean Square Partition of $B$};

\draw[arrowLine] (m0.north east) to[out=20, in=180] (m1.west);
\draw[arrowLine]
  (m1.south east) to[bend left=28]
  (m2.north east);
\draw[arrowLine]
  (m2.south east) to[bend left=10]
  (m3.north east);

\end{tikzpicture}

\end{minipage}
\caption{The transformation procedure for $(p,s)=(7,5)$.}
\label{fig:boxed-tikz}
\end{figure*}

The columns of~$R$ indexed by $0,1,2,3,4$ coincide with those of~$\overline R$. Each entry in the columns of~$R$ with indices $5$ and $6$ can be expressed as an $\mathbb{F}$-linear combination of entries of~$\overline{R}$, as shown by the following recovery sequence.
\begin{enumerate}
  \item $R(0,5)=\alpha^5=\alpha^{5}(1+\beta^{2}+\beta^{4})-\alpha^{5}\beta^{2}-\alpha^{5}\beta^{4}=\overline{R}(0,5)-\overline{R}(3,2)-\overline{R}(1,4)$,
  \item $R(1,5)=\alpha^6=\alpha^{6}(1+\beta^{2}+\beta^{4})-\alpha^{6}\beta^{2}-\alpha^{6}\beta^{4}=\overline{R}(1,5)-\overline{R}(4,2)-\overline{R}(2,4)$,
  \item $R(2,5)=\alpha^7\beta^2=\alpha^{7}(1+\beta^{2}+\beta^{4})-\alpha^7-\alpha^7\beta^{4}=\overline{R}(2,5)-\alpha^7-\overline{R}(3,4)$,
  \item $R(3,5)=\alpha^8\beta^2=\alpha^{8}(1+\beta^{2}+\beta^{4})-\alpha^{8}-\alpha^{8}\beta^{4}=\overline{R}(3,5)-\alpha^8-\overline{R}(4,4)$, 
  \item $R(4,5)=\alpha^9\beta^4=\alpha^{9}(1+\beta^{2}+\beta^{4})-\alpha^{9}-\alpha^{9}\beta^{2}=\overline{R}(4,5)-\alpha^9-\alpha^9\beta^2$, 
  \item $R(0,6)=\alpha^6\beta=\alpha^{6}(\beta+\beta^{3}+\beta^{4})-\alpha^{6}\beta^{3}-\alpha^{6}\beta^{4}=\overline{R}(0,6)-\overline{R}(3,3)-\overline{R}(2,4)$,
  \item $R(1,6)=\alpha^7\beta=\alpha^{7}(\beta+\beta^{3}+\beta^{4})-\alpha^{7}\beta^{3}-\alpha^{7}\beta^{4}=\overline{R}(1,6)-\overline{R}(4,3)-\overline{R}(3,4)$,
  \item $R(2,6)=\alpha^8\beta^3=\alpha^{8}(\beta+\beta^{3}+\beta^{4})-\alpha^{8}\beta-\alpha^{8}\beta^{4}=\overline{R}(2,6)-\alpha^8\beta-\overline{R}(4,4)$,
  \item $R(3,6)=\alpha^9\beta^3=\alpha^{9}(\beta+\beta^{3}+\beta^{4})-\alpha^{9}\beta-\alpha^{9}\beta^{4}=\overline{R}(3,6)-\alpha^9\beta-R(4,5)$,
  \item $R(4,6)=\alpha^{10}\beta^4=\alpha^{10}(\beta+\beta^{3}+\beta^{4})-\alpha^{10}\beta-\alpha^{10}\beta^{3}=\overline{R}(4,6)-\alpha^{10}\beta-\alpha^{10}\beta^3$.
\end{enumerate}
Thus $R(0,5), R(1,5), R(0,6), R(1,6)$ are already $\F$-linear combinations of entries of $\overline R$. For the remaining terms, note that $[\F(\alpha):\F]=p=7$, hence $\F(\alpha)=\spun_{\F}\{1, \alpha, \cdots, \alpha^6\}$. In particular,
\[
\begin{aligned}
\alpha^7,\alpha^8,\alpha^9
&\in \spun_{\F}\{1,\alpha,\cdots,\alpha^6\}
 = \spun_{\F}\bigl(\{\overline{R}(t,0):0\le t\le 4\}\cup\{R(0,5),R(1,5)\}\bigr),\\
\alpha^9\beta^2
&\in \spun_{\F}\{\alpha^2\beta^2,\alpha^3\beta^2,\cdots,\alpha^8\beta^2\}
 = \spun_{\F}\bigl(\{\overline{R}(t,2):0\le t\le 4\}\cup\{R(2,5),R(3,5)\}\bigr),\\
\alpha^8\beta,\alpha^9\beta,\alpha^{10}\beta
&\in \spun_{\F}\{\alpha\beta,\alpha^2\beta,\cdots,\alpha^7\beta\}
 = \spun_{\F}\bigl(\{\overline{R}(t,1):0\le t\le 4\}\cup\{R(0,6),R(1,6)\}\bigr),\\
\alpha^{10}\beta^3
&\in \spun_{\F}\{\alpha^3\beta^3,\alpha^4\beta^3,\cdots,\alpha^9\beta^3\}
 = \spun_{\F}\bigl(\{\overline{R}(t,3):0\le t\le 4\}\cup\{R(2,6),R(3,6)\}\bigr).
\end{aligned}
\]
Therefore every entry of $R$ is an $\F$-linear combination of entries of $\overline R$.
The recovery above illustrates the general mechanism: each interference monomial subtracted from an entry of~$\overline R$ either lies in an already established $\beta^{\theta}$-layer (i.e.\ $\beta^\theta \E$ for some integer $\theta$) or equals an entry recovered at an earlier step. 

Next, we give formal definitions of the basis transformation technique.

\subsection{Formal definition of the basis transformation technique}
We now define the basis-transformation technique, which consists of three operations: Euclidean Square Partition, Transposition, and Column Aggregation.

\paragraph{Euclidean Square Partition}
Partition the array~$B$ into square blocks whose side lengths are determined by the Euclidean algorithm for $(p,s)$.

\begin{definition}[Euclidean Square Partition]\label{def:partition}
Let $h,w$ be positive integers and consider the rectangle
$\mathcal{I}(h,w)=[0,h-1]\times[0,w-1]$.
Let $t:=\min\{h,w\}$. Tile $\mathcal{I}(h,w)$ by placing as many $t\times t$ squares as
possible along the longer side (left-to-right if $w\ge h$, and top-to-bottom if $h>w$).
This leaves a (possibly empty) residual rectangle $\mathcal{I}(h',w')$; if it is nonempty,
apply the same procedure recursively to $\mathcal{I}(h',w')$.
The resulting collection of squares forms a disjoint partition of $\mathcal{I}(h,w)$,
called the \emph{Euclidean Square Partition} of $\mathcal{I}(h,w)$, and is denoted by $\mathcal{P}(h,w)$. Each square in $\mathcal{P}(h,w)$ is called a \emph{Euclidean square}.
\end{definition}

Applying the above procedure to the index set of the basis array $B$ defined as in Fig.~\ref{fig:basis0}, namely for $(h,w)=(p,s)$, yields a partition of $B$ into squares whose side lengths are given by the multiset
\[
\Big\{\underbrace{s,\dots,s}_{a_0},\underbrace{b_1,\dots,b_1}_{a_1},\dots,
\underbrace{b_m,\dots,b_m}_{a_m}\Big\},
\]
as induced by the Euclidean chain~Eq.~\eqref{eq:euclid}.

\paragraph{Transposition}
 We introduce the Transposition operation on $B$, which converts $B$ into an $s\times p$ array $R$ by an overall transpose followed by a block-wise transpose.

\begin{definition}[Transposition]\label{def:Transposition}
Let $B$ be the $p\times s$ array defined as in Fig.~\ref{fig:basis0} and let $B^{\top}$ be the $s\times p$ array defined by
$B^{\top}(i,j)=B(j,i)$.
For each Euclidean square $Q$ in $B$ with top-left corner index $(x,y)$ and side length $t$,
let $Q^{\top}$ be the corresponding $t\times t$ square in $B^{\top}$ with top-left corner index $(y,x)$.
Define $R=\mathsf{Trans}_{p,s}(B)$ as the array obtained from $B^{\top}$ by transposing every square $Q^{\top}$,
equivalently, for every such $Q$ and all $0\le u,v\le t-1$,
\begin{equation}\label{eq:blockwise}
R(y+u,\ x+v)=B(x+u,\ y+v).
\end{equation}
\end{definition}

\paragraph{Column Aggregation}
 For each column of $R$, we keep the $\alpha$ part of each entry unchanged and replace the $\beta$ part by the sum of all distinct $\beta$-powers appearing in that column.

\begin{definition}
Since every entry of~$R$ is a monomial in $\alpha$ and~$\beta$,
for $(i,j)\in \I(s,p)$ we write
\[
R(i,j) = \alpha^{\gamma(i,j)}\,\beta^{\tau(i,j)},
\]
where $\gamma(i,j)$ and $\tau(i,j)$ denote the exponents of $\alpha$ and
$\beta$ in $R(i,j)$, respectively.
\end{definition}

\begin{definition}[Column Aggregation]\label{def:aggre}
For a column~$c$ of~$R$, the \emph{column aggregation}
operator~$\Agg(\cdot)$ replaces each entry $R(i,c)$ by
\[
\Agg(R(i,c))
:= \alpha^{\gamma(i,c)}
   \!\!\sum_{\theta\,\in\,
     \{\tau(i',c)\,:\,i'\in[0,s-1]\}}
   \!\!\beta^{\theta}.
\]
Applying~$\Agg(\cdot)$ to each column of~$R$ yields
the $s\times p$ array~$\overline{R}$.
\end{definition}

\begin{example}
In Fig.~\ref{fig:boxed-tikz} with $(p,s)=(7,5)$, column~$5$ of $R$ consists of entries
\[
\alpha^5,\ \alpha^6,\ \alpha^7\beta^2,\ \alpha^8\beta^2,\ \alpha^9\beta^4.
\]
 Column aggregation replaces each entry by
\[
\alpha^5(1+\beta^2+\beta^4),\ \alpha^6(1+\beta^2+\beta^4),\ \dots,\
\alpha^9(1+\beta^2+\beta^4),
\]
as shown in the bottom array $\overline{R}$ of Fig.~\ref{fig:boxed-tikz}.
\end{example}

Define the subspace
\begin{equation}\label{eq:S-def}
S:=\spun_{\F}\{\overline{R}(0,c):\,0\le c\le p-1\},
\end{equation}
and we verify in Section~\ref{subsec:proof-main} that $S$ satisfies~Eq.~\eqref{eq:smain}.

\subsection{Structure of the transposed array \texorpdfstring{$R$}{R}}\label{subsec:R1-form}

In this subsection, we characterize the properties of the transposed array $R$ defined as in Definition~\ref{def:Transposition} that will be used for the proof of Theorem~\ref{thm:main}.

We begin by recording two basic properties that follow directly from the definitions of $R$ and $B$ and will be used throughout.

\begin{observation}[Structural properties of~$R$]\label{obs:1}
The transposed array~$R$ has the following properties:
\begin{enumerate}
\item[(1)] The Euclidean Square Partition $\mathcal P(p,s)$ of~$B$ induces a Euclidean Square Partition $\mathcal P(s,p)$ of~$R$; the Euclidean square of~$B$ with top-left corner index $(x,y)$ becomes the Euclidean square of~$R$ whose top-left entry is $R(y,x)=B(x,y)=\alpha^x(\alpha\beta)^y$.
\item[(2)] Inside each Euclidean square of~$R$, the entry immediately to the right of any given entry is obtained by multiplying that entry by~$\alpha\beta$, while the entry immediately below it is obtained by multiplying that entry by~$\alpha$.
\end{enumerate}
\end{observation}
The preceding observations allow us to determine the $\alpha$ and $\beta$-exponents of each entry of~$R$.

\begin{lemma}\label{lem:Rformulas}
Let $(i,j)\in \I(s,p)$ and $Q=[\rho,\rho+\lambda-1]\times[\kappa,\kappa+\lambda-1]$
    be the Euclidean square of $\mathcal P(s,p)$ containing $(i,j)$. Then 
\begin{enumerate}
    \item[(1)] $\gamma(i,j)=i+j$.

    \item[(2)] $\tau(i,j)=\rho+(j-\kappa).$
\end{enumerate}
In particular, $\tau(i,j)\in [0,s-1]$.
\end{lemma}

\begin{proof}
Assume that
\[
i=\rho+r,\qquad j=\kappa+c,
\]
where $0\le r,c\le \lambda-1$. By Observation~\ref{obs:1}(1),
\[
R(\rho,\kappa)=\alpha^{\rho+\kappa}\beta^\rho.
\]
By Observation~\ref{obs:1}(2),
\[
R(i,j)=R(\rho+r,\kappa+c)
      =\alpha^{\rho+\kappa}\beta^\rho\cdot \alpha^r(\alpha\beta)^c
      =\alpha^{\rho+\kappa+r+c}\beta^{\rho+c}
      =\alpha^{i+j}\beta^{\rho+(j-\kappa)}.
\]
Hence
\[
\gamma(i,j)=i+j,
\qquad
\tau(i,j)=\rho+(j-\kappa).
\]
Finally, since $0\le \rho\le s-\lambda$ and $0\le j-\kappa=c\le \lambda-1$, we obtain
\[
0\le \tau(i,j)=\rho+(j-\kappa)\le (s-\lambda)+(\lambda-1)=s-1.
\]
This completes the proof.
\end{proof}

From Lemma~\ref{lem:Rformulas}, we obtain the following proposition.

\begin{proposition}\label{prop:beta-structure}
Let $R$ be the transposed array defined as in Definition~\ref{def:Transposition}. Then the following statements hold.
    \begin{enumerate}
        \item[(1)] The $\alpha$-exponent is consecutive along every column of $R$.
        \item[(2)] The $\beta$-exponent is constant along every column of each Euclidean square of $R$.
        \item[(3)] The operator $\Agg(\cdot)$ acts as the identity on the first $a_0s$ columns of $R$.
        \item[(4)] In each column of $R$, entries in lower squares, if any, have larger $\beta$-exponents.

    \end{enumerate}
\end{proposition}

\begin{proof}
(1) By Lemma~\ref{lem:Rformulas} (1), for any $(i,j)\in \I(s,p)$, the $\alpha$-exponent of $R(i,j)$ is $\gamma(i,j)=i+j$. Therefore, within any fixed column $j$, the $\alpha$-exponents are consecutive.

(2)
Let
\[
Q=[\rho,\rho+\lambda-1]\times[\kappa,\kappa+\lambda-1]
\]
be a Euclidean square of $\mathcal P(s,p)$. By Lemma~\ref{lem:Rformulas}(2), for every $(i,j)\in Q$,
\[
\tau(i,j)=\rho+(j-\kappa).
\]
Hence $\tau(i,j)$ depends only on the column index $j$, so the $\beta$-exponent is constant along every column of $Q$.

(3) Each of the first $a_0s$ columns of $R$ is contained in a single $s\times s$ block induced by $\mathcal P(p,s)$. By Part~(2), the $\beta$-exponent is constant along every such column. Hence the set of $\beta$-components is a singleton for each of these columns, so $\Agg(\cdot)$ acts as the identity on them.

(4) By Part~(3), it is enough to consider the last $b_1$ columns of $R$. Fix a column $j\in[a_0s,p-1]$. Let
\[
Q_{\mathrm{up}}=[\rho_{\mathrm{up}},\rho_{\mathrm{up}}+\lambda_{\mathrm{up}}-1]\times[\kappa_{\mathrm{up}},\kappa_{\mathrm{up}}+\lambda_{\mathrm{up}}-1]
\]
and
\[
Q_{\mathrm{low}}=[\rho_{\mathrm{low}},\rho_{\mathrm{low}}+\lambda_{\mathrm{low}}-1]\times[\kappa_{\mathrm{low}},\kappa_{\mathrm{low}}+\lambda_{\mathrm{low}}-1]
\]
be two squares of $\mathcal P(s,p)$ crossed by column~$j$, with $Q_{\mathrm{low}}$ strictly below $Q_{\mathrm{up}}$. By Lemma~\ref{lem:Rformulas} (2), the $\beta$-exponent in column $j$ contributed by $Q_{\mathrm{up}}$ is at most
\[
\rho_{\mathrm{up}}+\lambda_{\mathrm{up}}-1,
\]
while the $\beta$-exponent possessed by column $j$ in $Q_{\mathrm{low}}$ is at least
\[
\rho_{\mathrm{low}}.
\]
Since $Q_{\mathrm{low}}$ lies strictly below $Q_{\mathrm{up}}$, we have
\[
\rho_{\mathrm{up}}+\lambda_{\mathrm{up}}-1<\rho_{\mathrm{low}}.
\]
Therefore, in any column $j$, the $\beta$-exponent possessed by $Q_{\mathrm{low}}$ is strictly larger than that possessed by $Q_{\mathrm{up}}$.
\end{proof}

\subsection{Proof of Theorem~\ref{thm:main}}\label{subsec:proof-main}

Recall that, by Eq.~\eqref{eq:S-def}, $S$ is the $\mathbb{F}$-linear span of the first row of
$\overline R$, where $R$ denotes the transposed array and $\overline R$
denotes the array obtained from $R$ by column aggregation. The explicit
description of $S$ stated in Theorem~\ref{thm:main} is derived in Appendix~\ref{app:explicit-basis}. In this
subsection, we prove the first assertion of Theorem~\ref{thm:main}, namely,
\begin{equation*}
 \dim_{\F} S=p, \qquad \K=\sum_{u=0}^{s-1}\alpha^u S .
\end{equation*}

Since $\Agg(\cdot)$ does not change the $\alpha$-exponents, Proposition~\ref{prop:beta-structure} (1) implies that the entries in each column of $\overline R$ have consecutive $\alpha$-powers. Hence, the $u$-th row of $\overline R$ is equal to $\alpha^u$ times the first row, i.e. $\overline R(u,v)=\alpha^u\overline R(0,v)$ for any  $(u,v)\in\I(s,p)$. Set
\begin{equation}\label{eq:K}
  K:=\sum_{u=0}^{s-1}\alpha^u S
=\spun_{\F}\{\overline R(u,v):(u,v)\in\I(s,p)\}.  
\end{equation}

We first prove that $K=\K$.  Note that $K \subseteq \K$ since every entry of $\overline{R}$ is a sum of monomials in $\K$. Moreover, since $R$ is a rearrangement of the basis array $B$, we have
\[
\spun_{\F}(R)=\spun_{\F}(B)=\K.
\]
Therefore, to prove $\K\subseteq K$, it is enough to show that every entry of $R$ lies in $K$. 

By Proposition~\ref{prop:beta-structure}(3), the operator $\Agg(\cdot)$ acts as the identity on the first $a_0s$ columns of $R$. Hence, all entries in these columns already belong
to $K$.  It remains to consider the last $b_1$ columns.
Define
\begin{equation*}
R_1:=\bigl(R(i,a_0s+j)\bigr)_{(i,j)\in\I(s,b_1)},
\end{equation*}
that is, $R_1$ is the $s\times b_1$ subarray formed by the last $b_1$ columns of~$R$. We use $(i,j)\in\I(s,b_1)$ as local coordinates on~$R_1$ via the identification
\[
R_1(i,j)\longleftrightarrow R(i,a_0s+j).
\]
For $j\in[0,b_1-1]$, define
\[
\mathsf{Exp}(j):=\{t\in[0,s-1]:\beta^t \text{ appears in column }a_0s+j\text{ of }R\},
\]
and, for $(i,j)\in\I(s,b_1)$, set
\[
t_{i,j}:=\tau(i,a_0s+j).
\]
Thus $\mathsf{Exp}(j)$ is the set of $\beta$-exponents occurring in the local column $j$, and $t_{i,j}$ is the $\beta$-exponent of the target entry $R(i,a_0s+j)$. We shall prove
\begin{equation}\label{eq:goal}
R(i,a_0s+j)\in K, \qquad(i,j)\in\I(s,b_1).
\end{equation}

For every $(i,j)\in\I(s,b_1)$, column aggregation gives
\begin{equation}\label{eq:recover}
R(i,a_0s+j)
=
\overline R(i,a_0s+j)
-
\alpha^{a_0s+i+j}
\sum_{t\in\mathsf{Exp}(j)\setminus\{t_{i,j}\}}\beta^t,
\end{equation}
and,  by Eq.~\eqref{eq:K}, we have 
\begin{equation}\label{eq:AggSum}
\overline R(i,a_0s+j)\in K.
\end{equation}
Therefore, to prove Eq.~\eqref{eq:goal}, it suffices to show that the interference monomial
\[
 M_t(i,j):=\alpha^{a_0s+i+j}\beta^t\in K
\]
for every $t\in\mathsf{Exp}(j)\setminus\{t_{i,j}\}$.

We split the interference exponents
\(t\in\mathsf{Exp}(j)\setminus\{t_{i,j}\}\)
into the following three cases:
\[
\begin{array}{ll}
\mathrm{(C1)} & i+j<t,\\[1mm]
\mathrm{(C2)} & i+j\ge t \text{ and } t<t_{i,j},\\[1mm]
\mathrm{(C3)} & i+j\ge t \text{ and } t>t_{i,j}.
\end{array}
\]
These cases are pairwise disjoint and exhaustive. Indeed, if (C1) does not hold, then
\(i+j\ge t\). Since \(t\ne t_{i,j}\), either \(t<t_{i,j}\) or \(t>t_{i,j}\).

The three cases are handled by different mechanisms. Case~(C1) is direct.
Case~(C2) will be handled by proving that the whole \(\E\beta^t\)-layer is
contained in \(K\). Case~(C3) will be handled by proving that the required
monomial \(M_t(i,j)\) is obtained from a predecessor with respect to an order
to be defined on \(\I(s,b_1)\).

We also record that the condition \(i+j\ge t\) in Case (C2) is automatic. 
Fix \(t\in\mathsf{Exp}(j)\setminus\{t_{i,j}\}\) with \(t<t_{i,j}\). Let
\[
Q=[\rho,\rho+\lambda-1]\times[\kappa,\kappa+\lambda-1]
\]
be the Euclidean square of \(\mathcal{P}(s,b_1)\) whose entries lying in the local column \(j\) have \(\beta\)-exponent \(t\). Then,  Lemma~\ref{lem:Rformulas}(2) implies
\[
t=\rho+(j-\kappa).
\]
By Proposition~\ref{prop:beta-structure}(4), the unique Euclidean square containing
\((i,j)\) lies strictly below \(Q\). Hence \(i\ge \rho+\lambda\). Since
\(j\ge\kappa\), we have
\[
i+j-t=i-\rho+\kappa\ge \lambda+\kappa\ge 1.
\]
Thus \(i+j>t\). Therefore, in the proof of Case (C2), it suffices to use
the condition \(t<t_{i,j}\).

The following lemma supplies the direct argument for Case~(C1).

\begin{lemma}\label{lem:initial}
For all $0\le u\le a_0s-1$ and $0\le \theta\le s-1$,
\[
\alpha^u(\alpha\beta)^\theta\in K.
\]
\end{lemma}

\begin{proof}

By Proposition~\ref{prop:beta-structure}(3) and Lemma~\ref{lem:Rformulas}, for $0\le r\le a_0-1$ and $0\le \theta\le s-1$, we have
\[
\overline R(0,rs+\theta)=R(0,rs+\theta)=\alpha^{rs}(\alpha\beta)^\theta\in S\subseteq K.
\]
Multiplying by $\alpha^v$, $0\le v\le s-1$, Eq.~\eqref{eq:K} implies
\[
\alpha^{rs+v}(\alpha\beta)^\theta\in K,
\]
which means $\alpha^u(\alpha\beta)^\theta\in K$ for all $0\le u\le a_0s-1$ and $0\le \theta\le s-1$.
\end{proof}

We now apply Lemma~\ref{lem:initial} to Case~(C1). 
If $i+j<t$, then
\[
 0\le a_0s+i+j-t\le a_0s-1.
\]
Applying Lemma~\ref{lem:initial} with $u=a_0s+i+j-t$ and $\theta=t$, we obtain
\begin{equation}\label{eq:c1}
 M_t(i,j)
 =\alpha^{a_0s+i+j}\beta^t
 =\alpha^{a_0s+i+j-t}(\alpha\beta)^t
 \in K .                                          
\end{equation}

The next lemma is used for the proof of Case (C2). In this case, we recover not only the monomial $M_t(i,j)$,
but the entire $\E\beta^t$-layer.

\begin{lemma}\label{lem:field-gen}
Let $\theta\in[0,s-1]$. If there exists $z\in\mathbb Z$ such that
\[
\alpha^z\beta^\theta,\ \alpha^{z+1}\beta^\theta,\ \dots,\ \alpha^{z+p-1}\beta^\theta\in K,
\]
then $\E\beta^\theta\subseteq K$. 
\end{lemma}

\begin{proof}
Since $[\E:\F]=p$ and $\alpha$ generates $\E$ over $\F$, the $p$ consecutive powers
\[
\alpha^z,\alpha^{z+1},\dots,\alpha^{z+p-1}
\]
form an $\F$-basis of $\E$. Hence
\[
\spun_{\F}\{\alpha^z\beta^\theta,\alpha^{z+1}\beta^\theta,\dots,\alpha^{z+p-1}\beta^\theta\}
=
\E\beta^\theta
\subseteq K.
\]
\end{proof}

The remaining arguments take place inside $R_1$. We first identify the
partition induced on this subarray.

\begin{lemma}[Partition of $R_1$]\label{lem:R1-partition}
The Euclidean squares induced on the subarray $R_1$ are exactly the Euclidean Square Partition $\mathcal P(s,b_1)$ on its local index set $\I(s,b_1)$.
\end{lemma}

\begin{proof}
The transpose $(x,y)\mapsto (y,x)$ maps $\mathcal P(p,s)$ onto $\mathcal P(s,p)$. The Transposition operation transposes each square without changing the partition, so $R$ is partitioned by $\mathcal P(s,p)$. Removing the first $a_0s$ columns leaves the rectangle $[0,s-1]\times[a_0s,p-1]$, namely the subarray $R_1$; after reindexing its columns by $j\mapsto j-a_0s$, its local index set becomes $\I(s,b_1)=[0,s-1]\times[0,b_1-1]$. By Definition~\ref{def:partition}, the induced partition on this local index set is exactly $\mathcal P(s,b_1)$.
\end{proof}

We next introduce the locating map.

\begin{definition}[Locating map]\label{def:locating-map}
Given $\theta\in[0,s-1]$ and $\xi\in[0,b_1-1]$, define
\[
\operatorname{Loc}(\theta, \xi):=(i_{\theta,\xi},j_{\theta,\xi})\in \mathcal{I}(s,b_1)
\]
as the local coordinates of the entry of $R_1$ such that $R(i_{\theta,\xi},a_0s+j_{\theta,\xi})=\alpha^{a_0s+\theta+\xi}\beta^\theta$.

\end{definition}
By~Eq.~\eqref{eq:B-def}, $\alpha^{a_0s+\theta+\xi}\beta^\theta = B(a_0s+\xi,\,\theta)$,
which lies in row $a_0s+\xi\ge a_0s$ of $B$, hence in the last $b_1$ rows.
By~Eq.~\eqref{eq:blockwise}, every entry in rows $[a_0s,p-1]$ of $B$ is
mapped by the Transposition to a column $\ge a_0s$ in $R$, so this
monomial appears as an entry of $R_1$.
Since $R$ is a rearrangement of the $\mathbb{F}$-basis $B$, its entries
are pairwise distinct, so the position is unique.
Hence $\operatorname{Loc}(\theta,\xi)$ is well-defined.

\begin{lemma}\label{lem:layer-segment}
Given $\theta\in[0,s-1]$ and $\xi\in[0,b_1-1]$,
let
\[
Q_{\theta,\xi}=[\rho,\rho+\lambda-1]\times[\kappa,\kappa+\lambda-1]\in \mathcal{P}(s,b_1)
\]
be the unique square of $\mathcal P(s,b_1)$ containing $(\theta,\xi)$ (ensured by Lemma~\ref{lem:R1-partition}). 
Let $\operatorname{Loc}(\theta, \xi)
=(i_{\theta,\xi},j_{\theta,\xi})$ be defined as in Definition~\ref{def:locating-map}. 
Then
\begin{equation}\label{eq:locsquare}
(i_{\theta,\xi},j_{\theta,\xi})
=
\bigl(\rho+(\xi-\kappa),\ \kappa+(\theta-\rho)\bigr)\in Q_{\theta,\xi}.
\end{equation}

\end{lemma}

\begin{proof}

By Lemma~\ref{lem:Rformulas}(1),
\[
\gamma(\rho+(\xi-\kappa),a_0s+\kappa+(\theta-\rho))
=
a_0s+\theta+\xi.
\]
By Lemma~\ref{lem:Rformulas}(2),
\[
\tau(\rho+(\xi-\kappa),a_0s+\kappa+(\theta-\rho))
=
\rho+\bigl(\kappa+(\theta-\rho)-\kappa\bigr)
=
\rho+(\theta-\rho)
=
\theta.
\]
Therefore,
\[
R(i_{\theta,\xi},a_0s+j_{\theta,\xi})
=
\alpha^{a_0s+\theta+\xi}\beta^\theta
=R(\rho+(\xi-\kappa),a_0s+\kappa+(\theta-\rho)).
\]
Since all entries of~$R_1$ are pairwise distinct, we have $(i_{\theta,\xi},j_{\theta,\xi})
=
\bigl(\rho+(\xi-\kappa),\ \kappa+(\theta-\rho)\bigr)$.

Moreover, as $(\theta,\xi)\in Q_{\theta,\xi}$ and both $\theta-\rho$ and $\xi-\kappa$ belong to $[0,\lambda-1]$, it follows that $(i_{\theta,\xi},j_{\theta,\xi})\in Q_{\theta,\xi}$. 

\end{proof}

Cases~(C2) and~(C3) will require induction. The induction is taken with respect to the column-major lexicographic order on $\I(s,b_1)$, defined by
\[
(i',j')\prec(i,j)
\quad\Longleftrightarrow\quad
j'<j,\ \text{or}\ j'=j\ \text{and}\ i'<i.
\]
Both indices are bounded, so $\prec$ is a well-order on $\I(s,b_1)$.

The next two lemmas are crucial for the case analysis in the inductive step. Lemma~\ref{lem:pred-preceding-fixed} handles Case~(C2) $t<t_{i,j}$, and Lemma~\ref{lem:subdiag} handles Case~(C3):
$t>t_{i,j}$ with $i+j\ge t$. Their proofs are deferred to
Appendix~\ref{app:proof-lem-main}.

\begin{lemma}\label{lem:pred-preceding-fixed}
Fix $(i,j)\in\I(s,b_1)$. If $t\in\mathsf{Exp}(j)$ satisfies $t<t_{i,j}$, then for every $\xi\in[0,b_1-1]$,
\[
\operatorname{Loc}(t, \xi)=(i_{t,\xi},j_{t,\xi})\prec(i,j).
\]
\end{lemma}

\begin{lemma}\label{lem:subdiag}
Fix \((i,j)\in\I(s,b_1)\). 
If \(t\in\mathsf{Exp}(j)\) satisfies \(t>t_{i,j}\) and \(i+j\ge t\), and if we set
\[
\delta:=i+j-t,
\]
then
\[
0\le \delta\le b_1-1,
\]
and
\[
\operatorname{Loc}(t,\delta)=(i_{t,\delta},j_{t,\delta})\prec(i,j).
\]
\end{lemma}

We now prove Eq.~\eqref{eq:goal} by induction on $\prec$.  


\paragraph{Initial segment} 
We first prove Eq.~\eqref{eq:goal} for
\[
 (i,0),\qquad 0\le i\le b_1-1.
\]
Each such point lies in the top $b_1\times b_1$ square of $R_1$,
whose corresponding index set in $R$ is
\[
[0,b_1-1]\times [a_0s,p-1].
\]
Hence Lemma~\ref{lem:Rformulas}(2) gives
\[
t_{i,0}=\tau(i,a_0s)=0+(a_0s-a_0s)=0.
\]
Every exponent
\[
t\in \mathsf{Exp}(0)\setminus\{0\}
\]
is contributed by a Euclidean square of \(\mathcal P(s,b_1)\) that is
strictly below the top \(b_1\times b_1\) square. Let this square be
\[
Q=[\rho,\rho+\lambda-1]\times[\kappa,\kappa+\lambda-1].
\]
Since \(Q\) is crossed by the local column \(0\), we must have \(\kappa=0\).
Moreover, because \(Q\) is strictly below the top square, \(\rho\ge b_1\).
By Lemma~\ref{lem:Rformulas}(2), the exponent contributed by \(Q\) to
the local column \(0\) is
\[
t=\rho+(a_0s+0)-(a_0s+\kappa)=\rho\ge b_1.
\]
Thus, for \(0\le i\le b_1-1\), we have
\[
i+0\le b_1-1<t.
\]
Therefore Case (C1) applies. 
By Eq.~\eqref{eq:c1}, we have $M_t(i,0)\in K$, and
Eq.~\eqref{eq:recover} then shows that the corresponding entry
$R(i,a_0s)$ belongs to $K$.

\paragraph{Inductive step}
Now fix $(i,j)\in\I(s,b_1)$, and assume that
\[
R(i',a_0s+j')\in K
\qquad
\text{for all }(i',j')\prec(i,j).
\]
We show that $R(i,a_0s+j)\in K$. By Eqs.~\eqref{eq:recover} and \eqref{eq:AggSum}, it remains to prove that
\[
 M_t(i,j)=\alpha^{a_0s+i+j}\beta^t\in K
\]
for every $t\in\mathsf{Exp}(j)\setminus\{t_{i,j}\}$.

We distinguish the three cases for $t\in\mathsf{Exp}(j)\setminus\{t_{i,j}\}$.
\begin{enumerate}
\item[(C1)]  $i+j<t$. By Eq.~\eqref{eq:c1}, $M_t(i,j)\in K$.

\item[(C2)] $i+j\ge t$ and $t<t_{i,j}$. Lemma~\ref{lem:initial} gives
\[
\alpha^t\beta^t,\ \alpha^{t+1}\beta^t,\ \dots,\ \alpha^{t+a_0s-1}\beta^t\in K.
\]
For each $\xi\in[0,b_1-1]$, by Lemma~\ref{lem:pred-preceding-fixed}, we have $(i_{t,\xi},j_{t,\xi})\prec(i,j)$, so the induction hypothesis gives $R(i_{t,\xi},a_0s+j_{t,\xi})\in K$. Lemma~\ref{lem:layer-segment} then yields
\[
R(i_{t,\xi},a_0s+j_{t,\xi})
=
\alpha^{a_0s+t+\xi}\beta^t
\in K,
\qquad
0\le \xi\le b_1-1.
\]
Therefore $K$ contains the $p=a_0s+b_1$ consecutive elements
\[
\alpha^t\beta^t,\ \alpha^{t+1}\beta^t,\ \dots,\ \alpha^{t+p-1}\beta^t.
\]
Lemma~\ref{lem:field-gen} implies $\E\beta^t\subseteq K$. Since $\alpha^{a_0s+i+j}\in\E$, we obtain
\[
 M_t(i,j)=\alpha^{a_0s+i+j}\beta^t\in K.
\]

\item[(C3)] $i+j\ge t$ and $t>t_{i,j}$. Set $\delta:=i+j-t$. By Lemma~\ref{lem:subdiag}, we have
\[
\delta\in[0,b_1-1]
\qquad\text{and}\qquad
(i_{t,\delta},j_{t,\delta}):=\operatorname{Loc}(t,\delta)\prec(i,j).
\]
By the induction hypothesis, $R(i_{t,\delta},a_0s+j_{t,\delta})\in K$. Lemma~\ref{lem:layer-segment} gives
\[
R(i_{t,\delta},a_0s+j_{t,\delta})
=
\alpha^{a_0s+t+\delta}\beta^t
=
\alpha^{a_0s+i+j}\beta^t.
\]
Hence $ M_t(i,j)=\alpha^{a_0s+i+j}\beta^t\in K$.
\end{enumerate}

We have proved that every term in the summation on the right-hand side of Eq.~\eqref{eq:recover} belongs to $K$. Since $\overline R(i,a_0s+j)\in K$ by Eq.~\eqref{eq:AggSum}, it follows that $R(i,a_0s+j)\in K$. The induction is complete, so Eq.~\eqref{eq:goal} holds for all $(i,j)\in\I(s,b_1)$. Combined with the fact that the first $a_0s$ columns of $R$ are already in $K$ (Proposition~\ref{prop:beta-structure}(3)), every entry of $R$ lies in $K$. Hence
\[
\K=\spun_{\F}(R)\subseteq K.
\]
Because $K\subseteq \K$ by definition, we obtain $K=\K$. Thus
\[
\K=\sum_{u=0}^{s-1}\alpha^u S.
\]

Finally, $S$ is spanned by the $p$ elements $\overline R(0,0),\dots,\overline R(0,p-1)$, so $\dim_{\F}S\le p$. On the other hand,
\[
sp=\dim_{\F}\K
=\dim_{\F}K
\le
\sum_{u=0}^{s-1}\dim_{\F}(\alpha^uS)
=
s\dim_{\F}S
\le
sp.
\]
Hence $\dim_{\F}S=p$. This finishes the proof of Eq.~\eqref{eq:smain}.

\subsection{Application to the optimal repair of Reed-Solomon codes}

Theorem~\ref{thm:main} supplies the optimal repair-enabling subspaces needed in the TYB construction without the congruence condition $p_i\equiv1\pmod{s}$.
\begin{theorem}\label{thm2}
Let $k,n,d$ be positive integers such that $k< d \le n-1$, and set $s:=d+1-k$.
Given a finite field $\Fq$, let $p_1,\dots,p_n$ be any distinct primes with $p_i>s$ for all~$i$.
Choose elements $\alpha_1,\dots,\alpha_n$ such that $[\Fq(\alpha_i):\Fq]=p_i$ for $i=1,\dots,n$, and set
\[
\E:=\Fq(\alpha_1,\dots,\alpha_n).
\]
Let $\K$ be an extension field of $\E$ such that $[\K:\E]=s$, and let $\Omega=\{\alpha_1,\dots,\alpha_n\}\subseteq \E$. Then the code $\C:=\mathrm{RS}_{\K}(n,k,\Omega)$, viewed as an array code over $\Fq$ via the extension $\K/\Fq$, is an $(n,k)$ MSR code with repair degree $d$ and subpacketization
$\ell=s\prod_{i=1}^{n}p_i$.

\end{theorem}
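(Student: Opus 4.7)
The plan is to reduce Theorem~\ref{thm2} to a direct application of Lemma~\ref{lem:main} within the framework laid out in Section~\ref{subsec:tyb17}. The only ingredient of the Tamo-Ye-Barg argument that required the congruence hypothesis $p_i\equiv 1\pmod s$ was Lemma~\ref{lem1}; since Lemma~\ref{lem:main} removes this restriction while still producing a subspace of the correct dimension satisfying the decomposition $\sum_{u=0}^{s-1}\alpha_i^{u} S_i=\E$, the remainder of the proof can be imported almost verbatim.

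First, I would fix notation paralleling Subsection~\ref{subsec:tyb17}: for each $i\in[n]$, set $F_i:=\F_{q}(\{\alpha_j:j\neq i\})$. Because the $p_j$ are distinct primes, the $\alpha_j$'s generate pairwise linearly disjoint simple extensions of $\F_q$, so
\[
[\F:\F_{q}]=\prod_{j=1}^n p_j,\qquad [\F:F_i]=p_i,\qquad [\E:F_i]=s\,p_i.
\]
Let $\beta$ be any generator of $\E$ over $\F$. For each $i$, I would apply Lemma~\ref{lem:main} with the substitution $(\F,\E,\K,\alpha,\beta,p,s)\mapsto (F_i,\F,\E,\alpha_i,\beta,p_i,s)$; the hypothesis $p>s$ is satisfied because $p_i>s$ by construction. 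This yields an $F_i$-subspace $S_i\subseteq\E$ with
\[
\dim_{F_i}S_i=p_i,\qquad \E=\sum_{u=0}^{s-1}\alpha_i^{u}S_i,
\]
which is exactly the key structural property invoked in Step~(1) of the compressed proof of Theorem~\ref{thm1}.

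With $S_i$ in hand, I would then run the six-step repair procedure of Subsection~\ref{subsec:tyb17} unchanged: choose any $d$-subset $R_i\subseteq[n]\setminus\{i\}$, use $h(x)=\prod_{j\notin R_i\cup\{i\}}(x-\alpha_j)$ together with the powers $\alpha^{t}h(x)$ for $0\le t\le s-1$ to obtain $s$ parity-check equations in $\C^{\perp}$, fix a basis $e_1,\dots,e_{p_i}$ of $S_i$ over $F_i$, and invoke Lemma~\ref{lem:expression} to recover $c_i$ from the traces $\{\mathrm{tr}_{\E/F_i}(c_i\,e_m\,v_i\alpha_i^{t}h(\alpha_i))\}$. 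Exactly as in \cite{TYB17}, each helper $i_j\in R_i$ contributes only the $p_i$ symbols $\{\mathrm{tr}_{\E/F_i}(c_{i_j}v_{i_j}e_m)\}_{m=1}^{p_i}$ over $F_i$. Converting these counts to $\F_q$-symbols via $[F_i:\F_{q}]=\prod_{j\neq i}p_j$, the total bandwidth is
\[
d\,p_i\cdot\prod_{j\neq i}p_j=d\prod_{j=1}^{n}p_j=\frac{d\,\ell}{d-k+1},
\]
meeting the cut-set bound~\eqref{eq:cutset}. The MDS property, and hence optimality of $\C$ as an MSR code over $\F_q$, follows because $\C$ is a Reed-Solomon code over $\E$ and this property is preserved when viewing each $\E$-symbol as an $\ell$-dimensional $\F_q$-vector.

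The main obstacle has already been dispatched: it was constructing the subspace $S_i$ without the congruence $p_i\equiv 1\pmod s$, which is the content of Lemma~\ref{lem:main} (and its supporting Proposition~\ref{prop:Rj-closed}). Given that lemma, the proof of Theorem~\ref{thm2} is a mechanical transcription of the Tamo-Ye-Barg proof, with the only bookkeeping being (i) checking that the choice of $p_i$ as the $i$-th prime larger than $s$ (rather than the $i$-th prime congruent to $1$ modulo $s$) still gives $[\F:F_i]=p_i$, and (ii) verifying the arithmetic $d\,p_i\prod_{j\neq i}p_j=d\ell/s$. No further technical difficulty is expected.
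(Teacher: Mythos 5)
Your proposal is correct and takes essentially the same route as the paper: the paper likewise reduces Theorem~\ref{thm2} to the framework of Section~\ref{subsec:tyb17}, with Lemma~\ref{lem:main} (applied for each $i$ to the tower $F_i\subset\F\subset\E$ exactly as you specify) replacing Lemma~\ref{lem1} as the source of the subspaces $S_i$. Your added bookkeeping---linear disjointness of the $\F_q(\alpha_j)$, the substitution $(\F,\E,\K,\alpha,\beta,p,s)\mapsto(F_i,\F,\E,\alpha_i,\beta,p_i,s)$, and the $\F_q$-symbol bandwidth count $d\,p_i\prod_{j\ne i}p_j=d\ell/s$---is accurate and matches what the paper leaves implicit.
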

\begin{proof}
Fix a failed node $i\in[n]$ and a helper set $\mathcal H\subseteq[n]\setminus\{i\}$ with $|\mathcal H|=d$. By the repair argument in Section~\ref{subsec:tyb17}, it is enough to construct an $F_i$-subspace $S_i\subseteq\K$ satisfying Eq.~\eqref{eq:S}, where $F_i=\Fq(\{\alpha_j:j\neq i\})$. Applying Theorem~\ref{thm:main} to the tower $F_i\subset \E\subset\K$ with $\alpha=\alpha_i$ (noting $[\E:F_i]=p_i>s=[\K:\E]$), we obtain an $F_i$-subspace $S_i\subseteq\K$ with
\[
\dim_{F_i}(S_i)=p_i,\qquad
\K=\sum_{u=0}^{s-1}\alpha_i^u S_i.
\]
Substituting $S_i$ into the repair argument, each helper node in~$\mathcal H$ transmits $p_i$ symbols over~$F_i$. Since $[F_i:\Fq]=\prod_{j\neq i}p_j$, the total repair bandwidth measured in $\Fq$-symbols is
\[
d\cdot p_i\cdot[F_i:\Fq]
=d\prod_{j=1}^n p_j
=\frac{d}{s}\cdot s\prod_{j=1}^n p_j
=\frac{d\ell}{d-k+1},
\]
which meets the cut-set bound~Eq.~\eqref{eq:cutset}.
\end{proof}
\begin{remark}
To minimize the subpacketization, one takes $p_1,\dots,p_n$ to be the $n$ smallest primes greater than~$s$.
\end{remark}

\begin{proposition}[Asymptotic comparison with~\cite{TYB17} for fixed $s$]\label{prop:prime-AP}
Fix $s\ge 3$. Let $p_i$ denote the $i$-th smallest prime greater than $s$, and let $p_i^{\prime}$ denote the $i$-th smallest prime satisfying $p_i^{\prime} \equiv 1 \pmod{s}$. Then, as $n\to\infty$,
\begin{equation}\label{eq:ratio}
  \frac{\ell_{\mathrm{TYB}}}{\ell_{\mathrm{new}}}
  =\frac{\prod_{i=1}^{n}  p_i^{\prime}}{\prod_{i=1}^{n}  p_i}=\varphi(s)^{n+\mathrm{o}(n)}.
\end{equation}
\end{proposition}

\begin{proof}
For fixed $s$, the prime number theorem and its arithmetic-progression version~\cite{IK04} give
\[
p_n \sim n\log n,
\qquad
p_n' \sim \varphi(s)\, n\log n
\qquad (n\to\infty).
\]
Hence
\[
\frac{p_n'}{p_n}\to \varphi(s)
\qquad (n\to\infty),
\]
and therefore
\[
\log\frac{p_n'}{p_n}\to \log\varphi(s).
\]
By the Ces\`aro mean theorem,
\[
\frac{1}{n}\sum_{i=1}^{n}\log\frac{p_i'}{p_i}\to \log\varphi(s).
\]
Equivalently,
\[
\log\frac{\ell_{\mathrm{TYB}}}{\ell_{\mathrm{new}}}
=
\sum_{i=1}^{n}\log\frac{p_i'}{p_i}
=
n\log\varphi(s)+o(n).
\]
Exponentiating proves Eq.~\eqref{eq:ratio}.
\end{proof}

\begin{example}\label{ex:phi2-n12}
For $n=12$ and $s\in\{3,4,6\}$, namely for all admissible $s$ such that $\varphi(s)=2$, the subpacketization of our construction is strictly smaller than that of~\cite{TYB17}. More precisely,
\[
\renewcommand{\arraystretch}{1.15}
\begin{array}{c|c|c|c}
s & \ell_{\mathrm{new}} & \ell_{\mathrm{TYB}} & \ell_{\mathrm{TYB}}/\ell_{\mathrm{new}} \\
\hline
3 & 6.541\times 10^{15} & 6.024\times 10^{19} & 9.210\times 10^{3} \\
4 & 8.722\times 10^{15} & 4.001\times 10^{19} & 4.588\times 10^{3} \\
6 & 1.230\times 10^{17} & 1.205\times 10^{20} & 9.797\times 10^{2}
\end{array}
\]
\end{example}

\section{Conclusion}\label{sec:conclusion}
In this paper, we formulated the construction of repair-enabling subspaces
for optimal repair of Reed-Solomon codes as a basis-transformation problem
over a tower of finite fields. We developed an explicit transformation procedure, consisting of Euclidean Square Partition, Transposition, and Column Aggregation, which establishes the existence of optimal repair-enabling subspaces under the single condition $p>s$. This decouples the linear repair mechanism from the arithmetic constraints on the extension degrees used in previous constructions, yielding RS-MSR codes with subpacketization
$\ell=s\prod_{i=1}^n p_i$ for arbitrary distinct primes $p_i>s$.

Several questions remain open. The exponential lower bound of Alrabiah and Guruswami~\cite{omar21tit} shows that some exponential dependence of the subpacketization on $n$ is unavoidable for MSR codes. However, a gap remains between this lower bound and the subpacketization of our construction. Closing this gap is an interesting direction. It is also of interest to determine whether the proposed basis-transformation framework extends to multi-node or cooperative repair models, and whether it can support optimal-access repair in these more general settings.

\appendices

\section{Explicit form of the generators of S}\label{app:explicit-basis}

In this appendix, we derive the explicit formula for the generators of $S$ as shown in Eq.~\eqref{eq:S-def}, i.e., $\overline{R}(0,c)$ for $0\le c\le p-1$.

First, by Proposition~\ref{prop:beta-structure}(3), the operator $\Agg(\cdot)$ acts as the identity on the first $a_0s$ columns of $R$. Hence
\[
\overline R(0,rs+u)=R(0,rs+u)=\alpha^{rs}(\alpha\beta)^u,
\qquad
0\le r\le a_0-1,\ 0\le u\le s-1,
\]
which are exactly the generators of $S^{(1)}$. It remains to determine
\[
\R_j=\overline R(0,a_0s+j),\qquad 0\le j\le b_1-1.
\]
Retain the local coordinates $(i,j)\in\I(s,b_1)$ on $R_1$ via the identification
\[
R_1(i,j)\longleftrightarrow R(i,a_0s+j).
\]
Recall that, for each local column \(j\in[0,b_1-1]\),
\[
\Exp(j)
:=
\{t\in[0,s-1]:\beta^t\text{ appears in column }a_0s+j\text{ of }R\}.
\]
Therefore
\begin{equation}\label{eq:Rj-exp}
\R_j
=
\alpha^{a_0s+j}\sum_{t\in\Exp(j)}\beta^t .
\end{equation}
Thus the problem reduces to determining \(\Exp(j)\) for each local column \(j\).
By Proposition~\ref{prop:beta-structure}(2), within each Euclidean square of \(\mathcal{P}(s,b_1)\), the \(\beta\)-exponent is constant along each local column. Thus, for a fixed local column \(j\), every square crossed by that column contributes exactly one \(\beta\)-exponent to \(\Exp(j)\).

Let
\[
Q=[\rho,\rho+\lambda-1]\times[\kappa,\kappa+\lambda-1]
\in\mathcal{P}(s,b_1)
\]
be a Euclidean square. We say that \(Q\) \emph{crosses} the local column \(j\) if
\[
\kappa\le j\le \kappa+\lambda-1.
\]
In this case, the \(j\)-column contribution of \(Q\) is defined as
\begin{equation*}
\ctr_j(Q):=\rho+j-\kappa .
\end{equation*}
Indeed, by Lemma~\ref{lem:Rformulas}(2), every entry of \(R_1\) lying in the intersection of \(Q\) with the local column \(j\) has \(\beta\)-exponent 
\begin{equation*}
\rho+(a_0s+j)-(a_0s+\kappa)=\rho+j-\kappa.
\end{equation*}

Hence
\begin{equation}\label{eq:Exp-ctr}
\Exp(j)=\{\ctr_j(Q):Q\in\mathcal P(s,b_1)\text{ crosses column }j\}.
\end{equation}

Before proceeding, we recall the following notations from the Euclidean algorithm that will be frequently
used in the proofs below.
\begin{itemize}
    \item $\nu:=\lfloor m/2\rfloor,\nu':=\lceil m/2\rceil .$
    \item $C_0 := 0$ and $C_r := \sum_{\mu=1}^{r} a_{2\mu} b_{2\mu}$ for $1\le r\le \nu$.
     \item $Y_0 := 0$ and $Y_h := \sum_{\gamma=1}^{h} a_{2\gamma-1} b_{2\gamma-1}$ for $ 1\le h\le \nu'.$
   \item   For \(j\in[0,b_1-1]\),
\[
j^{(0)}:=j,\qquad
j^{(2r)}:=j^{(2r-2)} \pmod{b_{2r}},
\qquad 1\le r\le \nu.
\]  
\end{itemize}

Next, we divide the squares of $\mathcal P(s,b_1)$ into two types of layers according to their directions. 

\begin{definition}[Layers and column contributions of \(\mathcal{P}(s,b_1)\)]\label{def:layers-contributions}
For \(1\le h\le \nu'\), the \(h\)-th vertical layer is the family
\[
\mathcal V_h=\{V_{h,w}:0\le w\le a_{2h-1}-1\},
\]
where
\[
V_{h,w}
=
[Y_{h-1}+wb_{2h-1},\,Y_{h-1}+(w+1)b_{2h-1}-1]
\times
[C_{h-1},\,b_1-1].
\]
Thus the squares in \(\mathcal V_h\) have side length \(b_{2h-1}\) and are placed from top to bottom.

For \(1\le h\le \nu\), the \(h\)-th horizontal layer is the family
\[
\mathcal H_h=\{H_{h,w}:0\le w\le a_{2h}-1\},
\]
where
\[
H_{h,w}
=
[Y_h,\,s-1]
\times
[C_{h-1}+wb_{2h},\,C_{h-1}+(w+1)b_{2h}-1].
\]
Thus the squares in \(\mathcal H_h\) have side length \(b_{2h}\) and are placed from left to right.

For a fixed local column \(j\), a contribution \(\ctr_j(Q)\) is called a \emph{vertical contribution} if \(Q\in\mathcal V_h\) for some \(h\), and a \emph{horizontal contribution} if \(Q\in\mathcal H_h\) for some \(h\).
\end{definition}

By the recursive construction of the Euclidean Square Partition, the
vertical and horizontal layers in Definition~\ref{def:layers-contributions}
exhaust all Euclidean squares of \(\mathcal P(s,b_1)\). Figure~\ref{fig:decomp} illustrates this decomposition of
\(\mathcal P(s,b_1)\), together with the notations \(C_r\) and \(Y_h\).


\begin{figure}[H]
\centering
\begin{tikzpicture}[thick]
\setcounter{subfigure}{0}

\begin{scope}[
    local bounding box=LeftPic,
    braceL/.style={
        decorate,
        decoration={brace, amplitude=6pt, mirror, raise=4pt},
        thick
    },
    labeltextL/.style={
        midway,
        yshift=-18pt,
        font=\large
    },
    dashconnL/.style={
        dashed,
        thick,
        black!70
    },
    cguideL/.style={
        color=red,
        line width=0.95pt,
        dash pattern=on 3pt off 3pt
    },
    yguideL/.style={
        color=green,
        line width=0.95pt,
        dash pattern=on 3pt off 3pt
    },
    clabelL/.style={
        font=\large,
        text=red
    },
    ylabelL/.style={
        font=\large,
        text=green
    }
]

\def\LsizeL{2.0}
\def\LsizeM{1.2}
\def\LsizeS{0.8}
\def\LsameGap{0.35}
\def\LdiffGap{0.55}
\def\LdashLen{0.40}
\pgfmathsetmacro{\LhalfDashLen}{\LdashLen/2}

\coordinate (Pos1) at (0,0);
\draw (Pos1) rectangle ++(\LsizeL,\LsizeL);
\draw [braceL] (Pos1) -- ++(\LsizeL,0) node [labeltextL] {$b_2$};

\coordinate (Pos2) at ($(Pos1) + (\LsizeL+\LsameGap,0)$);

\coordinate (A1t) at ($(Pos1) + (\LsizeL,\LsizeL)$);
\coordinate (B1t) at ($(Pos2) + (0,\LsizeL)$);
\coordinate (M1t) at ($(A1t)!0.5!(B1t)$);
\draw[dashconnL] ($(M1t)+(-\LhalfDashLen,0)$) -- ($(M1t)+(\LhalfDashLen,0)$);

\coordinate (A1b) at ($(Pos1) + (\LsizeL,0)$);
\coordinate (B1b) at ($(Pos2) + (0,0)$);
\coordinate (M1b) at ($(A1b)!0.5!(B1b)$);
\draw[dashconnL] ($(M1b)+(-\LhalfDashLen,0)$) -- ($(M1b)+(\LhalfDashLen,0)$);

\draw (Pos2) rectangle ++(\LsizeL,\LsizeL);
\draw [braceL] (Pos2) -- ++(\LsizeL,0) node [labeltextL] {$b_2$};

\coordinate (Pos2b) at ($(Pos2) + (\LsizeL,0)$);
\draw (Pos2b) rectangle ++(\LsizeM,\LsizeM);
\draw [braceL] (Pos2b) -- ++(\LsizeM,0) node [labeltextL] {$b_4$};

\coordinate (Pos3) at ($(Pos2b) + (\LsizeM+\LsameGap,0)$);

\coordinate (A2t) at ($(Pos2b) + (\LsizeM,\LsizeM)$);
\coordinate (B2t) at ($(Pos3)  + (0,\LsizeM)$);
\coordinate (M2t) at ($(A2t)!0.5!(B2t)$);
\draw[dashconnL] ($(M2t)+(-\LhalfDashLen,0)$) -- ($(M2t)+(\LhalfDashLen,0)$);

\coordinate (A2b) at ($(Pos2b) + (\LsizeM,0)$);
\coordinate (B2b) at ($(Pos3)  + (0,0)$);
\coordinate (M2b) at ($(A2b)!0.5!(B2b)$);
\draw[dashconnL] ($(M2b)+(-\LhalfDashLen,0)$) -- ($(M2b)+(\LhalfDashLen,0)$);

\draw (Pos3) rectangle ++(\LsizeM,\LsizeM);
\draw [braceL] (Pos3) -- ++(\LsizeM,0) node [labeltextL] {$b_4$};

\coordinate (Pos4) at ($(Pos3) + (\LsizeM+\LdiffGap,0)$);

\node[inner sep=0pt, outer sep=0pt, font=\normalsize]
  at ($(Pos3)+(\LsizeM+\LdiffGap/2,\LsizeS/2)$) {$\cdots$};

\draw (Pos4) rectangle ++(\LsizeS,\LsizeS);
\draw [braceL] (Pos4) -- ++(\LsizeS,0) node [labeltextL] {$b_{2\nu}$};

\coordinate (Pos4b) at ($(Pos4) + (\LsizeS+\LsameGap,0)$);

\coordinate (A3t) at ($(Pos4)  + (\LsizeS,\LsizeS)$);
\coordinate (B3t) at ($(Pos4b) + (0,\LsizeS)$);
\coordinate (M3t) at ($(A3t)!0.5!(B3t)$);
\draw[dashconnL] ($(M3t)+(-\LhalfDashLen,0)$) -- ($(M3t)+(\LhalfDashLen,0)$);

\coordinate (A3b) at ($(Pos4)  + (\LsizeS,0)$);
\coordinate (B3b) at ($(Pos4b) + (0,0)$);
\coordinate (M3b) at ($(A3b)!0.5!(B3b)$);
\draw[dashconnL] ($(M3b)+(-\LhalfDashLen,0)$) -- ($(M3b)+(\LhalfDashLen,0)$);

\draw (Pos4b) rectangle ++(\LsizeS,\LsizeS);
\draw [braceL] (Pos4b) -- ++(\LsizeS,0) node [labeltextL] {$b_{2\nu}$};

\coordinate (COneBase) at ($(Pos2)+(\LsizeL,\LsizeL)$);
\coordinate (CTwoBase) at ($(Pos3)+(\LsizeM,\LsizeM)$);
\coordinate (CNuBase)  at ($(Pos4b)+(\LsizeS,\LsizeS)$);

\coordinate (COneTop) at ($(COneBase)+(0,1.65)$);
\coordinate (CTwoTop) at ($(CTwoBase)+(0,3.00)$);
\coordinate (CNuTop)  at ($(CNuBase) +(0,4.25)$);

\draw[cguideL] (COneBase) -- (COneTop)
    node[clabelL, above] {$C_1$};

\draw[cguideL] (CTwoBase) -- (CTwoTop)
    node[clabelL, above] {$C_2$};

\draw[cguideL] (CNuBase) -- (CNuTop)
    node[clabelL, above] {$C_\nu$};

\draw[yguideL] (COneBase) -- ($(CNuBase |- COneBase)$)
    node[ylabelL, right=4pt] {$Y_1$};

\draw[yguideL] (CTwoBase) -- ($(CNuBase |- CTwoBase)$)
    node[ylabelL, right=4pt] {$Y_2$};

\end{scope}

\node[font=\footnotesize, align=center] at ($(LeftPic.south)+(0,-0.8)$) {%
  \refstepcounter{subfigure}\phantomsection\label{fig:xxxx}%
  (\thesubfigure)\quad Horizontal layers%
};

\def\RsizeL{2.0}
\def\RsizeM{1.2}
\def\RsizeS{0.8}
\def\RvGap{0.4}
\def\RjumpGap{0.55}

\pgfmathsetmacro{\RHeight}{2*\RsizeL + 2*\RsizeM + 2*\RsizeS + 3*\RvGap + \RjumpGap}
\def\LRGap{3.0}

\def\RBottomLift{-0.15}

\begin{scope}[
    shift={($(LeftPic.south east)+(\LRGap,\RHeight+\RBottomLift)$)},
    local bounding box=RightPic,
    braceR/.style={
        decorate,
        decoration={brace, amplitude=6pt, raise=4pt},
        thick
    },
    labeltextR/.style={
        midway,
        xshift=12pt,
        anchor=west,
        font=\large
    },
    dashconnR/.style={dashed, thick}
]

\def\RdashLen{0.18}
\pgfmathsetmacro{\RhalfDashLen}{\RdashLen/2}

\coordinate (TopB2_TR) at (0,0);
\draw (TopB2_TR) rectangle (-\RsizeL, -\RsizeL);
\draw [braceR] (TopB2_TR) -- (0, -\RsizeL) node [labeltextR] {$b_1$};

\coordinate (BotB2_TR) at (0, -\RsizeL - \RvGap);
\draw (BotB2_TR) rectangle (-\RsizeL, -\RsizeL - \RvGap - \RsizeL);
\draw [braceR] (BotB2_TR) -- (0, -\RsizeL - \RvGap - \RsizeL) node [labeltextR] {$b_1$};

\coordinate (B2a_BL) at (-\RsizeL, -\RsizeL);
\coordinate (B2b_TL) at (-\RsizeL, -\RsizeL - \RvGap);
\coordinate (M1L) at ($(B2a_BL)!0.5!(B2b_TL)$);
\draw [dashconnR] ($(M1L)+(0,-\RhalfDashLen)$) -- ($(M1L)+(0,\RhalfDashLen)$);

\coordinate (B2a_BR) at (0, -\RsizeL);
\coordinate (B2b_TRR) at (0, -\RsizeL - \RvGap);
\coordinate (M1R) at ($(B2a_BR)!0.5!(B2b_TRR)$);
\draw [dashconnR] ($(M1R)+(0,-\RhalfDashLen)$) -- ($(M1R)+(0,\RhalfDashLen)$);

\coordinate (B4a_TR) at (0, -\RsizeL - \RvGap - \RsizeL);
\draw (B4a_TR) rectangle (-\RsizeM, -\RsizeL - \RvGap - \RsizeL - \RsizeM);
\draw [braceR] (B4a_TR) -- ++(0, -\RsizeM) node [labeltextR] {$b_3$};

\coordinate (B4b_TR) at (0, -\RsizeL - \RvGap - \RsizeL - \RsizeM - \RvGap);
\draw (B4b_TR) rectangle (-\RsizeM, -\RsizeL - \RvGap - \RsizeL - \RsizeM - \RvGap - \RsizeM);
\draw [braceR] (B4b_TR) -- ++(0, -\RsizeM) node [labeltextR] {$b_3$};

\coordinate (B4a_BL) at (-\RsizeM, -\RsizeL - \RvGap - \RsizeL - \RsizeM);
\coordinate (B4b_TL) at (-\RsizeM, -\RsizeL - \RvGap - \RsizeL - \RsizeM - \RvGap);
\coordinate (M2L) at ($(B4a_BL)!0.5!(B4b_TL)$);
\draw [dashconnR] ($(M2L)+(0,-\RhalfDashLen)$) -- ($(M2L)+(0,\RhalfDashLen)$);

\coordinate (B4a_BR) at (0, -\RsizeL - \RvGap - \RsizeL - \RsizeM);
\coordinate (B4b_TRR) at (0, -\RsizeL - \RvGap - \RsizeL - \RsizeM - \RvGap);
\coordinate (M2R) at ($(B4a_BR)!0.5!(B4b_TRR)$);
\draw [dashconnR] ($(M2R)+(0,-\RhalfDashLen)$) -- ($(M2R)+(0,\RhalfDashLen)$);

\coordinate (B4_End) at (0, -\RsizeL - \RvGap - \RsizeL - \RsizeM - \RvGap - \RsizeM);

\coordinate (S1_TR) at ($(B4_End) + (0, -\RjumpGap)$);

\def\VdotRaise{0.13}
\coordinate (VdotsPos) at ($ (B4_End)!0.5!(S1_TR) + (-\RsizeS/2,\VdotRaise) $);
\node[inner sep=0pt, outer sep=0pt, font=\normalsize] at (VdotsPos) {$\vdots$};

\draw (S1_TR) rectangle ++(-\RsizeS, -\RsizeS);
\draw [braceR] (S1_TR) -- ++(0, -\RsizeS) node [labeltextR] {$b_{2\nu'-1}$};

\coordinate (S2_TR) at ($(S1_TR) + (0, -\RsizeS - \RvGap)$);

\coordinate (S1_BL) at ($(S1_TR) + (-\RsizeS, -\RsizeS)$);
\coordinate (S2_TL) at ($(S2_TR) + (-\RsizeS, 0)$);
\coordinate (M3L) at ($(S1_BL)!0.5!(S2_TL)$);
\draw [dashconnR] ($(M3L)+(0,-\RhalfDashLen)$) -- ($(M3L)+(0,\RhalfDashLen)$);

\coordinate (S1_BR) at ($(S1_TR) + (0, -\RsizeS)$);
\coordinate (S2_TRR) at (S2_TR);
\coordinate (M3R) at ($(S1_BR)!0.5!(S2_TRR)$);
\draw [dashconnR] ($(M3R)+(0,-\RhalfDashLen)$) -- ($(M3R)+(0,\RhalfDashLen)$);

\draw (S2_TR) rectangle ++(-\RsizeS, -\RsizeS);
\draw [braceR] (S2_TR) -- ++(0, -\RsizeS) node [labeltextR] {$b_{2\nu'-1}$};

\end{scope}

\node[font=\footnotesize, align=center] at ($(RightPic.south)+(0,-0.8)$) {%
  \refstepcounter{subfigure}\phantomsection\label{fig:yyyyy}%
  (\thesubfigure)\quad Vertical layers%
};

\end{tikzpicture}
\caption{Decomposition of
\(\mathcal P(s,b_1)\).
Subfigure~(\ref{fig:xxxx}) groups the horizontal layers, and
subfigure~(\ref{fig:yyyyy}) groups the vertical layers. The red
dashed lines mark the column boundaries $C_r$, and the green
dashed lines mark the row boundaries~$Y_h$.}
\label{fig:decomp}
\end{figure}

With the above notation and definitions, we can now determine $\mathsf{Exp}(j)$ for
each column~$j$ of $R_1$.

\begin{lemma}
\label{lem:column-path}
Fix $j\in[0,b_1-1]$.
\begin{enumerate}
\item[(1)] If \(C_{r-1}\le j<C_r\) for some \(1\le r\le \nu\), then column \(j\) crosses all squares in the first \(r\) vertical layers and exactly one square in the \(r\)-th horizontal layer. More precisely,
\[
\Exp(j)
=
\Bigl\{
Y_{h-1}+wb_{2h-1}+j^{(2h-2)}:
1\le h\le r,\ 0\le w\le a_{2h-1}-1
\Bigr\}
\cup
\{Y_r+j^{(2r)}\}.
\]

\item[(2)] If \(m\) is odd and \(C_\nu\le j\le b_1-1\), then column \(j\) crosses all squares in the \(\nu'\) vertical layers and no horizontal-layer square. In this case,
\[
\Exp(j)
=
\Bigl\{
Y_{h-1}+wb_{2h-1}+j^{(2h-2)}:
1\le h\le \nu',\ 0\le w\le a_{2h-1}-1
\Bigr\}.
\]
\end{enumerate}
\end{lemma}

\begin{proof}

Let
\[
Q=[\rho,\rho+\lambda-1]\times[\kappa,\kappa+\lambda-1]
\]
be a square of \(\mathcal{P}(s,b_1)\) crossed by column \(j\). By the definition of \(\ctr_j(Q)\), the exponent contributed by \(Q\) to \(\Exp(j)\) is
\begin{equation}\label{eq:contribution-general}
\ctr_j(Q)=\rho+j-\kappa.
\end{equation}

We record a simple consequence of the recursive definition of
\(j^{(2r)}\) in Eq.~\eqref{eq:j-iter}. For \(0\le r\le \nu\), it holds that
\begin{equation}\label{eq:j-coordinate}
j^{(2r)}=j-C_r,
\qquad\text{whenever } C_r\le j\le b_1-1.
\end{equation}
Indeed, the case \(r=0\) is immediate. If the identity holds for
\(r-1\) and suppose that \(C_r\le j\le b_1-1\), then
\[
j^{(2r-2)}=j-C_{r-1}
=a_{2r}b_{2r}+(j-C_r).
\]
Note that
\(b_1-C_{r-1}=b_{2r-1}\) and \(C_r-C_{r-1}=a_{2r}b_{2r}\), so
\[
0 \leq j-C_r<b_1-C_r=b_{2r-1}-a_{2r}b_{2r}<b_{2r}.
\]
Hence
\[
j^{(2r)}
=j^{(2r-2)}\pmod {b_{2r}}
=j-C_r.
\]

First, consider a vertical-layer square \(V_{h,w}\) crossed by column \(j\), i.e., \(C_{h-1} \leq j \leq b_1-1\). By Eq.~\eqref{eq:j-coordinate} with \(r=h-1\), we have
\[
j=C_{h-1}+j^{(2h-2)}.
\]
Substituting
\[
\rho=Y_{h-1}+wb_{2h-1} \text{ and }
\kappa=C_{h-1}
\]
into Eq.~\eqref{eq:contribution-general}, we obtain the vertical contribution
\begin{equation}\label{eq:vertical-ctr}
\ctr_j(V_{h,w})
=
Y_{h-1}+wb_{2h-1}+j^{(2h-2)}.
\end{equation}

Next, consider a horizontal-layer square \(H_{h,w}\) crossed by column \(j\), i.e., $$C_{h-1}+wb_{2h}\le j \le C_{h-1}+(w+1)b_{2h}-1.$$ 
By Eq.~\eqref{eq:j-coordinate} with \(r=h-1\), we have
\[
j^{(2h-2)}=j-C_{h-1}.
\]
Since \(j-C_{h-1}\in[wb_{2h},(w+1)b_{2h}-1]\), the recursive definition gives
\[
j=C_{h-1}+wb_{2h}+j^{(2h)}.
\]
Substituting
\[
\rho=Y_h,
 \text{ and }
\kappa=C_{h-1}+wb_{2h}
\]
into Eq.~\eqref{eq:contribution-general}, we obtain the horizontal contribution
\begin{equation}\label{eq:horizontal-ctr}
\ctr_j(H_{h,w})=Y_h+j^{(2h)}.
\end{equation}

\noindent\emph{(1) $C_{r-1}\le j<C_r$ for some $1\le r\le\nu$.}

Then \(j\) lies in the column range of every vertical layer \(\mathcal V_h\) with \(1\le h\le r\), and in the column range of no later vertical layer. Moreover, it lies in exactly one square of the \(r\)-th horizontal layer and in no horizontal layer with index different from \(r\). Substituting Eq.~\eqref{eq:vertical-ctr} and Eq.~\eqref{eq:horizontal-ctr} to Eq.~\eqref{eq:Exp-ctr} give the first formula.

\smallskip
\noindent\emph{(2) $m$ odd and $C_\nu\le j\le b_1-1$.}

Then \(\nu'=\nu+1\), and column \(j\) lies in the column range of every vertical layer \(\mathcal V_h\), \(1\le h\le \nu'\), while it lies in no horizontal layer. Substituting Eq.~\eqref{eq:vertical-ctr} to Eq.~\eqref{eq:Exp-ctr} gives the second formula.
\end{proof}

Finally, substituting the results of Lemma~\ref{lem:column-path}(1) and (2) into Eq.~\eqref{eq:Rj-exp}, respectively, we obtain

\begin{enumerate}
\item[(1)] If $1\le r\le \nu$ and $C_{r-1}\le j<C_r$, then
\[
\R_j
=
\alpha^{a_0 s+j}\!\left(
\sum_{h=0}^{r-1}\sum_{w=0}^{a_{2h+1}-1}
\beta^{Y_{h}
  +w\,b_{2h+1}+j^{(2h)}}
+
\beta^{Y_r+j^{(2r)}}
\right).
\]

\item[(2)] If $m$ is odd and $C_{\nu}\le j\le b_1-1$, then
\[
\R_j
=
\alpha^{a_0s+j}
\sum_{h=0}^{\nu'-1}\sum_{w=0}^{a_{2h+1}-1}
\beta^{Y_h
  +w\,b_{2h+1}+j^{(2h)}}, 
\]
\end{enumerate}
which coincide with the formulas stated in Theorem~\ref{thm:main}.

\section{Proofs of Lemmas~\ref{lem:pred-preceding-fixed} and~\ref{lem:subdiag}}\label{app:proof-lem-main}

This appendix proves Lemmas~\ref{lem:pred-preceding-fixed} and~\ref{lem:subdiag}, which are used in the induction argument of Section~\ref{subsec:proof-main}. We retain the notation of Section~\ref{sec:main} and Appendix~\ref{app:explicit-basis}. In particular, \(R_1\) denotes the local subarray formed by the last \(b_1\) columns of \(R\), with local index set \(\I(s,b_1)\), and
\[
t_{i,j}:=\tau(i,a_0s+j),
\qquad (i,j)\in\I(s,b_1).
\]
The common task in both lemmas is to show that the position produced by the locating map precedes the current position \((i,j)\) in the column-major order
\[
(i',j')\prec(i,j)
\quad\Longleftrightarrow\quad
j'<j\quad\text{or}\quad j'=j\text{ and }i'<i.
\]
Lemma~\ref{lem:pred-preceding-fixed} treats the case \(t<t_{i,j}\), whereas Lemma~\ref{lem:subdiag} treats the case \(t>t_{i,j}\) with \(i+j\ge t\).

We shall use the column-wise contribution notation introduced in Definition~\ref{def:layers-contributions}. Thus, when \(t\in\Exp(j)\), we say that \(t\) is produced by a square \(Q\) if
\[
t=\ctr_j(Q).
\]
Equivalently, \(t\) is the \(\beta\)-exponent of the entries lying in the intersection of \(Q\) with the local column \(j\).

The following two facts are useful for our proofs.

\begin{fact}\label{fact:horizontal-max}
Fix $j\in[0,b_1-1]$. If $\mathsf{Exp}(j)$ contains a horizontal contribution, then that contribution is the unique maximal element of $\mathsf{Exp}(j)$. 
Consequently, for every $(i,j)\in\I(s,b_1)$ and every
\[
t\in\mathsf{Exp}(j)\setminus\{t_{i,j}\},
\]
the inequality $t<t_{i,j}$ implies that $t$ is a vertical contribution.
\end{fact}

\begin{proof}
By Lemma~\ref{lem:column-path}, the horizontal contribution comes
from a square lying strictly below
every vertical-layer square crossed by local column~$j$.
By parts~(2) and~(4) of Proposition~\ref{prop:beta-structure},
each square contributes a single $\beta$-exponent per column,
and lower squares contribute strictly larger exponents. 
Hence, in local column~$j$, the horizontal contribution, if it exists, is strictly greater than any vertical contribution and is therefore the unique maximum of $\mathsf{Exp}(j)$.
\end{proof}

\begin{fact}\label{fact:layer-boundary}
Fix $r\in[1,\nu]$ and $w\in[0,a_{2r}]$, and let
\[
  D := C_{r-1}+w\,b_{2r}.
\]
Every Euclidean square of $\mathcal{P}(s,b_1)$ whose row range meets $[Y_r,s-1]$
lies entirely in columns $[0,D-1]$ or entirely in columns $[D,b_1-1]$.
\end{fact}

\begin{proof}
By Definition~\ref{def:layers-contributions}, the vertical layers $1,\ldots,r$ of
$\mathcal{P}(s,b_1)$ have row ranges contained in $[0,Y_r-1]$,
so they do not meet $[Y_r,s-1]$.
The remaining squares that meet $[Y_r,s-1]$ fall into three groups.
\begin{itemize}
\item \emph{Horizontal layers $1,\ldots,r-1$.} For $1\leq h<r$, the $h$-th horizontal layer  has column range
  $[C_{h-1},C_h-1]\subseteq[0,C_{r-1}-1]\subseteq[0,D-1]$
  (since $D\ge C_{r-1}$).
\item \emph{The $r$-th horizontal layer.}
  It partitions $[Y_r,s-1]\times[C_{r-1},C_r-1]$ into $a_{2r}$ squares
  of width~$b_{2r}$; each occupies columns
  $[C_{r-1}+w'b_{2r},\,C_{r-1}+(w'+1)b_{2r}-1]$ for some
  $w'\in[0,a_{2r}-1]$, and lies entirely on one side of $D$ by the definition of $D$.
\item \emph{All squares after the $r$-th horizontal layer.}
  Their column ranges are contained in $[C_r,b_1-1]\subseteq[D,b_1-1]$
  (since $D\le C_r$).
\end{itemize}
\end{proof}

\begin{proof}[\textbf{Proof of Lemma~\ref{lem:pred-preceding-fixed}}]
By Fact~\ref{fact:horizontal-max}, $t<t_{i,j}$ implies that $t$ is a
vertical contribution. Hence there exist \(h\in[1,\nu']\) and \(w\in[0,a_{2h-1}-1]\) such that \(t\) is produced by
\[
Q^*=V_{h,w}
=
[Y_{h-1}+wb_{2h-1},\,Y_{h-1}+(w+1)b_{2h-1}-1]
\times
[C_{h-1},\,b_1-1].
\]
Thus
\begin{equation}\label{eq:lemma8-tj}
t
=
Y_{h-1}+wb_{2h-1}+j^{(2h-2)},
\qquad
j=C_{h-1}+j^{(2h-2)}.
\end{equation}

Fix \(\xi\in[0,b_1-1]\), and let \(Q'\) be the square of \(\mathcal{P}(s,b_1)\) containing \((t,\xi)\).

\smallskip
\emph{Case~1: $\xi<C_{h-1}$.}
This case is void when $h=1$, so assume $h\ge 2$. Since \(t\ge Y_{h-1}\), the row range of \(Q'\) meets \([Y_{h-1},s-1]\). Applying Fact~\ref{fact:layer-boundary} with \(r=h-1\) and \(D=C_{h-1}\), the square \(Q'\) lies entirely on one side of column \(C_{h-1}\). 
The assumption $\xi<C_{h-1}$ forces it onto the left side, giving
column range $\subseteq[0,C_{h-1}-1]$.
By~Eq.~\eqref{eq:locsquare}, $(i_{t,\xi},j_{t,\xi})\in Q'$, hence $j_{t,\xi}\le C_{h-1}-1<C_{h-1}\le j$.
The column priority in $\prec$ yields
$\operatorname{Loc}(t,\xi)\prec(i,j)$.

\smallskip
\emph{Case~2: $\xi\ge C_{h-1}$.} 
In this case, by Eq.~\eqref{eq:lemma8-tj}, $t$ lies in the row range of $Q^*$, since
$$j^{(2h-2)}=j-C_{h-1}\le b_1-1-C_{h-1}=b_{2h-1}-1.$$
Moreover, $\xi$
lies in $[C_{h-1},b_1-1]$ by assumption. Hence \((t,\xi)\in Q^*\). The explicit locating formula from Lemma~\ref{lem:layer-segment} gives
\[
j_{t,\xi}
=
C_{h-1}+\bigl(t-(Y_{h-1}+wb_{2h-1})\bigr)
=
C_{h-1}+j^{(2h-2)}
=
j,
\]
and
\[
i_{t,\xi}
=
Y_{h-1}+wb_{2h-1}+(\xi-C_{h-1}).
\]
Since \(\xi\le b_1-1=C_{h-1}+b_{2h-1}-1\), we have
\begin{equation}\label{eq:lemma8-itxi-upper}
i_{t,\xi}
\le
Y_{h-1}+(w+1)b_{2h-1}-1.
\end{equation}
On the other hand, \(t<t_{i,j}\), together with Proposition~\ref{prop:beta-structure}(4), implies that the square containing \((i,j)\) lies strictly below \(Q^*\). Hence
\begin{equation}\label{eq:lemma8-i-lower}
i\ge Y_{h-1}+(w+1)b_{2h-1}.
\end{equation}
Combining Eq.~\eqref{eq:lemma8-itxi-upper} and Eq.~\eqref{eq:lemma8-i-lower}, we obtain \(i_{t,\xi}<i\). Since \(j_{t,\xi}=j\), the row tie-breaker in the column-major order gives
\[
\Loc(t,\xi)=(i_{t,\xi},j_{t,\xi})\prec(i,j).
\]

\end{proof}

\begin{proof}[\textbf{Proof of Lemma~\ref{lem:subdiag}}]
From $\delta=i+j-t$ and $i+j\ge t$, we have $\delta\ge 0$.
It remains to show $\delta\le b_1-1$ and $\operatorname{Loc}(t,\delta)\prec(i,j)$.

Since $t>t_{i,j}$, Proposition~\ref{prop:beta-structure}(4) implies that the square producing \(t\) in column \(j\) lies strictly below the square containing \((i,j)\). We distinguish whether \(t\) is a vertical or horizontal contribution.

\smallskip
\emph{Vertical contribution.}
Suppose that \(t\) is produced by a vertical-layer square \(V_{h,w}\). By Lemma~\ref{lem:column-path}, 
\[
t=Y_{h-1}+wb_{2h-1}+j^{(2h-2)},
\qquad
j=C_{h-1}+j^{(2h-2)}.
\]
Since the square containing \((i,j)\) lies strictly above \(V_{h,w}\), we have
\[
i\le Y_{h-1}+wb_{2h-1}-1.
\]
Therefore
\[
\delta
=
i+j-t
=
i+C_{h-1}-(Y_{h-1}+wb_{2h-1})
\le C_{h-1}-1
\le b_1-1.
\]
Together with \(\delta\ge0\), this also shows that the present case can occur only when \(h\ge2\). Moreover, \(\delta<C_{h-1}\le j\). Since \(t\ge Y_{h-1}\), the square containing \((t,\delta)\) has row range meeting \([Y_{h-1},s-1]\). By Fact~\ref{fact:layer-boundary}, applied with \(r=h-1\) and \(D=C_{h-1}\), this square lies entirely to the left of column \(C_{h-1}\). Therefore the column coordinate of \(\Loc(t,\delta)\) is at most \(C_{h-1}-1<j\), and hence
\[
\Loc(t,\delta)\prec(i,j).
\]

\smallskip
\emph{Horizontal contribution.}
Suppose that \(t\) is produced by a horizontal-layer square \(H_{r,w}\). Put
\[
\kappa:=C_{r-1}+wb_{2r}.
\]
By Lemma~\ref{lem:column-path}, 
\[
t=Y_r+j^{(2r)},
\qquad
j=\kappa+j^{(2r)}.
\]
Since the square containing \((i,j)\) lies strictly above \(H_{r,w}\), we have
\[
i\le Y_r-1.
\]
Consequently,
\[
\delta
=
i+j-t
=
i+\kappa-Y_r
\le \kappa-1
\le b_1-1.
\]
Again \(\delta\ge0\), and hence \(\delta<\kappa\le j\). Since \(t\ge Y_r\), the square containing \((t,\delta)\) has row range meeting \([Y_r,s-1]\). Applying Fact~\ref{fact:layer-boundary} with \(D=\kappa\), this square lies entirely to the left of column \(\kappa\). Therefore the column coordinate of \(\Loc(t,\delta)\) is at most \(\kappa-1<j\), and hence
\[
\Loc(t,\delta)\prec(i,j).
\]
This completes the proof.

\end{proof}

The following example provides a detailed walkthrough of the inductive recovery argument for a concrete instance, exercising every component of the proof machinery developed in Lemmas~\ref{lem:Rformulas}--\ref{lem:subdiag}.

\begin{example}\label{ex:induction-illustration}

Let $p=70$, $s=29$. We show how $R(14,68)$ can be recovered from $\overline{R}$.

The Euclidean algorithm gives
\[
70=2\cdot 29+12,\quad
29=2\cdot 12+5,\quad
12=2\cdot 5+2,\quad
5=2\cdot 2+1,\quad
2=2\cdot 1,
\]
so
\[
a_0=2,\quad a_1=a_2=a_3=a_4=2,\quad
b_1=12,\quad b_2=5,\quad b_3=2,\quad b_4=1,
\]
and $a_0s=58$, $\nu=\lfloor m/2\rfloor=2$. By Lemma~\ref{lem:R1-partition}, the local index set of $R_1$ is
\[
\I(29,12)=[0,28]\times[0,11],
\]
and its partition is $\mathcal P(29,12)$. Thus the first vertical layer consists of two $12\times 12$ squares, the first horizontal layer consists of two $5\times5$ squares, the second vertical layer consists of two $2\times2$ squares, and the second horizontal layer consists of two $1\times1$ squares.

Applying Lemma~\ref{lem:Rformulas}(2) to these squares gives the following local $\beta$-exponent profile:
\[
\tau(i,58+j)=
\begin{cases}
j, & 0\le i\le 11,\\
12+j, & 12\le i\le 23,\\
24+(j\bmod 5), & 24\le i\le 28,\ 0\le j\le 9,\\
24+(j-10), & 24\le i\le 25,\ j\in\{10,11\},\\
26+(j-10), & 26\le i\le 27,\ j\in\{10,11\},\\
28, & i=28,\ j\in\{10,11\}.
\end{cases}
\]

Next we calculate $\mathsf{Exp}(10)$.
Consider the local column $j=10$, i.e., the global column $58+10=68$. Here
\[
C_1=a_2b_2=10,\qquad
C_2=a_2b_2+a_4b_4=12,
\]
and
\[
j^{(0)}=10,\qquad j^{(2)}=10\pmod 5=0,\qquad
j^{(4)}=0\pmod 1=0.
\]
Since $C_1\le 10<C_2$, Lemma~\ref{lem:column-path}(1) with $r=2$ applies. The vertical contributions are
\[
0+0\cdot 12+10=10,\qquad
0+1\cdot 12+10=22,
\]
from the two $12\times12$ squares, and
\[
24+0\cdot 2+0=24,\qquad
24+1\cdot 2+0=26,
\]
from the two $2\times2$ squares. The second horizontal layer contributes
\[
28+j^{(4)}=28.
\]
Therefore
\[
\mathsf{Exp}(10)=\{10,22,24,26,28\}.
\]

\emph{The target local point.}
Take $(i,j)=(14,10)$. By the exponent profile above,
\[
t_{14,10}=\tau(14,68)=22,
\]
and by Lemma~\ref{lem:Rformulas}(1),
\[
R(14,68)=\alpha^{58+14+10}\beta^{22}=\alpha^{82}\beta^{22}.
\]
Thus the recovery formula~Eq.~\eqref{eq:recover} becomes
\[
R(14,68)
=
\overline R(14,68)
-
\alpha^{82}\bigl(\beta^{10}+\beta^{24}+\beta^{26}+\beta^{28}\bigr).
\]
Moreover, $\overline R(14,68)\in K$ by Eq.~\eqref{eq:AggSum}. It remains to show that
\[
\alpha^{82}\beta^t\in K
\qquad
\text{for }t\in\{10,24,26,28\}.
\]
The four interference exponents fall into the three cases exactly as follows:
\[
\begin{array}{c|c|c}
 t & \text{condition} & \text{case} \\ \hline
 26,28 & 14+10<t & \text{(C1)}\\
10 & t<t_{14,10}=22 & \text{(C2)} \\
24 & t>t_{14,10}=22\text{ and }14+10\ge 24 & \text{(C3)} 
\end{array}
\]

\emph{Case (C1): $t=26,28$.}
For $t\in\{26,28\}$, set
\[
q_t:=58+14+10-t=82-t.
\]
Then $q_{26}=56$ and $q_{28}=54$, both lying in $[0,a_0s-1]=[0,57]$. Lemma~\ref{lem:initial}, applied with $\theta=t$, gives
\[
\alpha^{q_t}(\alpha\beta)^t
=
\alpha^{82}\beta^t
\in K.
\]
Hence $\alpha^{82}\beta^{26},\alpha^{82}\beta^{28}\in K$.

\emph{Case (C2): $t=10$.}
Here $10<t_{14,10}$, so Lemma~\ref{lem:pred-preceding-fixed} applies. We spell out the locating map to make the use of Lemma~\ref{lem:layer-segment} explicit. For every $\xi\in[0,11]$, the point $(10,\xi)$ lies in the top-left square $[0,11]\times[0,11]$ of $\mathcal P(29,12)$. Thus~Eq.~\eqref{eq:locsquare} gives
\[
\operatorname{Loc}(10,\xi)=(\xi,10).
\]
Since $\xi\le 11<14$, we have $(\xi,10)\prec(14,10)$ for every $\xi\in[0,11]$. By the induction hypothesis and Lemma~\ref{lem:layer-segment},
\[
R(\xi,58+10)
=
\alpha^{58+10+\xi}\beta^{10}
=
\alpha^{68+\xi}\beta^{10}
\in K,
\qquad 0\le \xi\le 11.
\]
On the other hand, Lemma~\ref{lem:initial} gives
\[
\alpha^{10+u}\beta^{10}\in K,
\qquad 0\le u\le 57.
\]
Together these are the $70=p$ consecutive elements
\[
\alpha^{10}\beta^{10},\alpha^{11}\beta^{10},\ldots,\alpha^{79}\beta^{10}
\]
in $K$. Lemma~\ref{lem:field-gen} yields $\E\beta^{10}\subseteq K$, and therefore $\alpha^{82}\beta^{10}\in K$.

\emph{Case (C3): $t=24$.}
Here
\[
\delta:=14+10-24=0\in [0,b_1-1]=[0,11].
\]
Since $(24,0)$ lies in the square $[24,28]\times[0,4]$, formula~Eq.~\eqref{eq:locsquare} gives
\[
\operatorname{Loc}(24,0)=(24,0).
\]
This point precedes $(14,10)$ by column priority, since its column is $0<10$. By the induction hypothesis and Lemma~\ref{lem:layer-segment},
\[
R(24,58)
=
\alpha^{58+24+0}\beta^{24}
=
\alpha^{82}\beta^{24}
\in K.
\]

We have proved that all interference monomials
\[
\alpha^{82}\beta^{10},\quad
\alpha^{82}\beta^{24},\quad
\alpha^{82}\beta^{26},\quad
\alpha^{82}\beta^{28}
\]
belong to $K$. Since $\overline R(14,68)\in K$, the recovery formula~Eq.~\eqref{eq:recover} gives $R(14,68)\in K$, as required by the inductive step.
\end{example}

\end{document}